\numberwithin{equation}{section}
\DeclareFontShape{T1}{lmr}{b}{sc}{<->ssub*cmr/bx/sc}{}
\DeclareFontShape{T1}{lmr}{bx}{sc}{<->ssub*cmr/bx/sc}{}
\newcommandx{\unsure}[2][1=]{\todo[linecolor=red,backgroundcolor=red!25,bordercolor=red,#1]{#2}}
\newcommandx{\change}[2][1=]{\todo[linecolor=blue,backgroundcolor=blue!25,bordercolor=blue,#1]{#2}}
\newcommandx{\info}[2][1=]{\todo[linecolor=OliveGreen,backgroundcolor=OliveGreen!25,bordercolor=OliveGreen,#1]{#2}}
\newcommandx{\improvement}[2][1=]{\todo[linecolor=black,backgroundcolor=black!25,bordercolor=black,#1]{#2}}
\newcommandx{\thiswillnotshow}[2][1=]{\todo[disable,#1]{#2}}
\crefname{proposition}{Proposition}{Propositions}
\crefname{equation}{}{}
\newtheorem{theorem}{Theorem}[section]
\newtheorem{lemma}[theorem]{Lemma}
\newtheorem{proposition}[theorem]{Proposition}
\theoremstyle{definition}
\newtheorem{definition}[theorem]{Definition}
\newtheorem{remark}[theorem]{Remark}
\crefname{assumption}{Assumption}{Assumptions}
\crefname{definition}{Definition}{Definitions}
\crefname{corollary}{Corollary}{Corollaries}
\crefname{enumi}{item}{items}
\DeclareMathOperator{\N}{\mathbb{N}}
\DeclareMathOperator{\R}{\mathbb{R}}
\DeclareMathOperator{\C}{\mathbb{C}}
\renewcommand{\phi}{\varphi}
\renewcommand{\tilde}{\widetilde}
\renewcommand{\hat}{\widehat}
\newcommand{\inv}{^{-1}}
\DeclareMathOperator{\diag}{diag}
\DeclareMathOperator{\BO}{\mathcal{O}}
\newcommand{\pri}{^\prime}
\renewcommand{\epsilon}{\varepsilon}
\DeclareMathOperator{\dd}{d\!}
\renewcommand{\tilde}{\widetilde}
\renewcommand{\hat}{\widehat}
\DeclareMathOperator{\iL}{{\mathsf{L}}}
\DeclareMathOperator{\iR}{{\mathsf{R}}}
\DeclareMathOperator{\iLR}{{\mathsf{L},\mathsf{R}}}
\DeclareMathOperator{\antidiag}{antidiag}
\DeclareMathOperator{\sspan}{span}
\newcommand{\leqs}{\leqslant}
\newcommand{\geqs}{\geqslant}
\renewcommand{\leq}{\leqs}
\renewcommand{\geq}{\geqs}
\begin{document}

\title{Exponentially localised interface eigenmodes in finite chains of resonators}

  \author[H. Ammari]{Habib Ammari}
 \address{\parbox{\linewidth}{Habib Ammari\\
  ETH Z\"urich, Department of Mathematics, Rämistrasse 101, 8092 Z\"urich, Switzerland}}
 \email{habib.ammari@math.ethz.ch}
 \thanks{}

 \author[S. Barandun]{Silvio Barandun}
  \address{\parbox{\linewidth}{Silvio Barandun\\
 ETH Z\"urich, Department of Mathematics, Rämistrasse 101, 8092 Z\"urich, Switzerland}}
  \email{silvio.barandun@sam.math.ethz.ch}

  \author[B. Davies]{Bryn Davies}
 \address{\parbox{\linewidth}{Bryn Davies\\
Department of Mathematics, Imperial College London, 180 Queen's Gate, London SW7~2AZ, UK}}
\email{bryn.davies@imperial.ac.uk}

 \author[E.O. Hiltunen]{Erik Orvehed Hiltunen}
\address{\parbox{\linewidth}{Erik Orvehed Hiltunen\\
Department of Mathematics, Yale University, 10 Hillhouse Ave,
New Haven, CT~06511, USA}}
\email{erik.hiltunen@yale.edu}

 \author[T. Kosche]{Thea Kosche}
  \address{\parbox{\linewidth}{Thea Kosche\\
 ETH Z\"urich, Department of Mathematics, Rämistrasse 101, 8092 Z\"urich, Switzerland}}
  \email{thea.kosche@sam.math.ethz.ch}

 \author[P. Liu]{Ping Liu}
  \address{\parbox{\linewidth}{Ping Liu\\
 ETH Z\"urich, Department of Mathematics, Rämistrasse 101, 8092 Z\"urich, Switzerland}}
  \email{ping.liu@sam.math.ethz.ch}

\maketitle

\begin{abstract}
This paper studies wave localisation in chains of finitely many resonators. There is an extensive theory predicting the existence of localised modes induced by defects in infinitely periodic systems. This work extends these principles to finite-sized systems. We consider finite systems of subwavelength resonators arranged in dimers that have a geometric defect in the structure. This is a classical wave analogue of the Su-Schrieffer-Heeger model. We prove the existence of a spectral gap for defectless finite dimer structures and find a direct relationship between eigenvalues being within the spectral gap and the localisation of their associated eigenmode. Then we show the existence and uniqueness of an eigenvalue in the gap in  the defect structure, proving the existence of a unique localised interface mode. 
To the best of our knowledge, our method, based on Chebyshev polynomials, is the first  to characterise quantitatively the localised interface modes in systems of finitely many resonators.
\end{abstract}

\date{}

\bigskip

\noindent \textbf{Keywords.}   Finite Hermitian resonator systems, subwavelength resonances, interface eigenmodes, capacitance matrix, topological protection, Chebyshev polynomials, robust wave localisation  \par

\bigskip

\noindent \textbf{AMS Subject classifications.}
34L40, 
34L20, 
35B34, 
15A18, 
15B05. 

\section{Introduction}

Wave localisation at subwavelength scales has many important applications in nanophotonics and nanophononics \cite{review1,review2,review3}. Here, \emph{subwavelength} means that the incident wavelengths are much larger than the size of the building blocks of the structure. When these relatively small building blocks are locally resonant (which, in the case studied here, will be due to large material contrasts) this allows for waves to be localised at subwavelength scales, thereby beating traditional diffraction limits \cite{sima2023,anderson2022,essentiel2023}.  This principle has unlocked a wealth of novel nanotechnologies \cite{sheng,cummer}.

In this paper, we consider wave localisation at the interface between two systems of \emph{finitely many} subwavelength resonators. These resonators are arranged in pairs or dimers, such that the model we consider shares many of the features of the Su-Schrieffer-Heeger (SSH) model in quantum mechanics \cite{original_ssh}. We prove the existence of exponentially localised interface eigenmodes in this finite structure. These interface modes have been subject to numerous studies in the setting of infinite structures. A particular focus has been put on studying the topological properties of infinite periodic structures and then introducing carefully designed interfaces so as to create so-called \emph{topologically protected} eigenmodes \cite{ssh3d}. These modes have significant implications for applications since they are expected to be robust with respect to imperfections in the design. These concepts have been widely studied in a variety of settings, most notably in quantum mechanics for the Schrödinger operator \cite{fefferman1,fefferman2} and more recently, for related continuous models of classical wave systems in \cite{hai2022,hai2023,haiarxiv,bryn,guo2023}. In finite-sized systems, these eigenmodes have been observed both experimentally and numerically; see, for instance, \cite{ammari.ea2023Edge,ssh3d,jasa,jasa2} and references therein.

The present work considers the far less-explored but more realistic physical setting of interface modes in finite dimer structures. As far as we know, it is the first work to deal with the existence of interface modes in finite structures. It provides a one-to-one correspondence between the position of the eigenfrequency in the spectrum of the corresponding infinite periodic structure (\emph{i.e.}, in the asymptotic spectral bulk, its boundary or in the asymptotic spectral gap; see \Cref{def:spectralgap}) and the behaviour (localised versus delocalised) of the corresponding eigenmode. Our results hold for any finite and large enough system of dimer structure with defect satisfying a mild condition on the size of the resonators. Furthermore, we show that the eigenfrequencies lying in the gap converge exponentially as the size of the structure goes to infinity and provide an explicit and simple formula for the limit. We also show that the Hermitian nature of the system together with the position of the interface eigenfrequency in the gap yields a very strong stability property when the geometry of the system is perturbed.

Our approach is based on the capacitance matrix, which in general is a powerful tool for characterising the subwavelength eigenfrequencies of a system of high-contrast resonators \cite{ammari.davies.ea2021Functional, feppon.cheng.ea2023Subwavelength,ammari.ea2023Edge}. In essence, the capacitance formulation provides a discrete approximation to the continuous spectral problem of the differential model, valid in the high-contrast asymptotic limit. This approximation is based solely on first principles and provides a natural starting point for both theoretical analysis and numerical simulation of wave localisation in the subwavelength regime \cite{essentiel2023,ammari2023perturbed,anderson2022}. In the case of 
our one-dimensional, finite system of dimer resonators with a geometric defect, the capacitance matrix is a perturbed tridiagonal block $2$-Toeplitz matrix; see \eqref{eq: strucutre capacitance matrix}. Based on some properties of the Chebyshev polynomials, we prove existence and uniqueness of an eigenfrequency in the gap for the finite dimer structures with a geometric defect and show exponential localisation of the corresponding eigenmode. Our proof does not rely on any perturbation argument neither on any \emph{a prior} assumption on the band gaps of the two dimer systems in the structure.  
    
The paper is organised as follows. In \Cref{sect2}, we introduce the capacitance matrix approximation of a finite, dimer system with a geometric defect in order to approximate its eigenfrequencies and associated eigenmodes. In \Cref{sect3}, we first write the characteristic polynomial of the capacitance matrix associated to a defectless, finite dimer system in terms of Chebyshev polynomials. Then, we characterise the structure of the eigenvectors of the capacitance matrix associated with the perturbed dimer structure. \Cref{sect4} is devoted to the derivation of a direct relationship between an eigenvalue being within the spectral gap and the localisation of its corresponding eigenvector. In \Cref{sect5}, we show the existence and uniqueness of an eigenvalue of the capacitance matrix lying in the asymptotic spectral gap and consequently the existence of a unique localised interface eigenvector. Furthermore, we show that this eigenvalue converges exponentially fast to a value in the gap as the system size increases. In \Cref{sect6}, we analyse the robustness of the interface localisation with respect to imperfections in the structure design. The paper ends with some concluding remarks in \Cref{sect7}. In \Cref{appA}, we recall some basic definitions and results on pseudo-spectra of normal matrices. \Cref{appC} is devoted to the discussion of the topological origin of the robustness of the interface eigenmodes. 


\section{One-dimensional subwavelength resonator systems} \label{sect2}
We consider a one-dimensional chain of $N$ disjoint identical high-contrast resonators $D_i\coloneqq (x_i^{\iL},x_i^{\iR})$, where $(x_i^{\iLR})_{1\<i\<N} \subset \R$ are the $2N$ boundaries satisfying $x_i^{\iL} < x_i^{\iR} <  x_{i+1}^{\iL}$ for any $1\leq i \leq N-1$. We fix the coordinates such that $x_1^{\iL}=0$. We also denote by  $\ell_i = x_i^{\iR} - x_i^{\iL}$ the length of the each of the resonators, and by $s_i= x_{i+1}^{\iL} -x_i^{\iR}$ the spacing between the $i$-th and $(i+1)$-th inclusions. The system is illustrated in \cref{fig:setting}. 

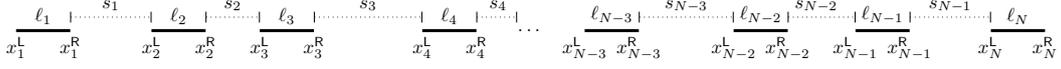
\begin{figure}[htb]
    \centering
    \begin{adjustbox}{width=\textwidth}
    \begin{tikzpicture}
        \coordinate (x1l) at (1,0);
        \path (x1l) +(1,0) coordinate (x1r);
        \path (x1r) +(0.75,0.7) coordinate (s1);
        \path (x1r) +(1.5,0) coordinate (x2l);
        \path (x2l) +(1,0) coordinate (x2r);
        \path (x2r) +(0.5,0.7) coordinate (s2);
        \path (x2r) +(1,0) coordinate (x3l);
        \path (x3l) +(1,0) coordinate (x3r);
        \path (x3r) +(1,0.7) coordinate (s3);
        \path (x3r) +(2,0) coordinate (x4l);
        \path (x4l) +(1,0) coordinate (x4r);
        \path (x4r) +(0.4,0.7) coordinate (s4);
        \path (x4r) +(1,0) coordinate (dots);
        \path (dots) +(1,0) coordinate (x5l);
        \path (x5l) +(1,0) coordinate (x5r);
        \path (x5r) +(1.75,0) coordinate (x6l);
        \path (x5r) +(0.875,0.7) coordinate (s5);
        \path (x6l) +(1,0) coordinate (x6r);
        \path (x6r) +(1.25,0) coordinate (x7l);
        \path (x6r) +(0.525,0.7) coordinate (s6);
        \path (x7l) +(1,0) coordinate (x7r);
        \path (x7r) +(1.5,0) coordinate (x8l);
        \path (x7r) +(0.75,0.7) coordinate (s7);
        \path (x8l) +(1,0) coordinate (x8r);
        \draw[ultra thick] (x1l) -- (x1r);
        \node[anchor=north] (label1) at (x1l) {$x_1^{\iL}$};
        \node[anchor=north] (label1) at (x1r) {$x_1^{\iR}$};
        \node[anchor=south] (label1) at ($(x1l)!0.5!(x1r)$) {$\ell_1$};
        \draw[dotted,|-|] ($(x1r)+(0,0.25)$) -- ($(x2l)+(0,0.25)$);
        \draw[ultra thick] (x2l) -- (x2r);
        \node[anchor=north] (label1) at (x2l) {$x_2^{\iL}$};
        \node[anchor=north] (label1) at (x2r) {$x_2^{\iR}$};
        \node[anchor=south] (label1) at ($(x2l)!0.5!(x2r)$) {$\ell_2$};
        \draw[dotted,|-|] ($(x2r)+(0,0.25)$) -- ($(x3l)+(0,0.25)$);
        \draw[ultra thick] (x3l) -- (x3r);
        \node[anchor=north] (label1) at (x3l) {$x_3^{\iL}$};
        \node[anchor=north] (label1) at (x3r) {$x_3^{\iR}$};
        \node[anchor=south] (label1) at ($(x3l)!0.5!(x3r)$) {$\ell_3$};
        \draw[dotted,|-|] ($(x3r)+(0,0.25)$) -- ($(x4l)+(0,0.25)$);
        \node (dots) at (dots) {\dots};
        \draw[ultra thick] (x4l) -- (x4r);
        \node[anchor=north] (label1) at (x4l) {$x_4^{\iL}$};
        \node[anchor=north] (label1) at (x4r) {$x_4^{\iR}$};
        \node[anchor=south] (label1) at ($(x4l)!0.5!(x4r)$) {$\ell_4$};
        \draw[dotted,|-|] ($(x4r)+(0,0.25)$) -- ($(dots)+(-.25,0.25)$);
        \draw[ultra thick] (x5l) -- (x5r);
        \node[anchor=north] (label1) at (x5l) {$x_{N-3}^{\iL}$};
        \node[anchor=north] (label1) at (x5r) {$x_{N-3}^{\iR}$};
        \node[anchor=south] (label1) at ($(x5l)!0.5!(x5r)$) {$\ell_{N-3}$};
        \draw[dotted,|-|] ($(x5r)+(0,0.25)$) -- ($(x6l)+(0,0.25)$);
        \draw[ultra thick] (x6l) -- (x6r);
        \node[anchor=north] (label1) at (x6l) {$x_{N-2}^{\iL}$};
        \node[anchor=north] (label1) at (x6r) {$x_{N-2}^{\iR}$};
        \node[anchor=south] (label1) at ($(x6l)!0.5!(x6r)$) {$\ell_{N-2}$};
        \draw[dotted,|-|] ($(x6r)+(0,0.25)$) -- ($(x7l)+(0,0.25)$);
        \draw[ultra thick] (x7l) -- (x7r);
        \node[anchor=north] (label1) at (x7l) {$x_{N-1}^{\iL}$};
        \node[anchor=north] (label1) at (x7r) {$x_{N-1}^{\iR}$};
        \node[anchor=south] (label1) at ($(x7l)!0.5!(x7r)$) {$\ell_{N-1}$};
        \draw[dotted,|-|] ($(x7r)+(0,0.25)$) -- ($(x8l)+(0,0.25)$);
        \draw[ultra thick] (x8l) -- (x8r);
        \node[anchor=north] (label1) at (x8l) {$x_{N}^{\iL}$};
        \node[anchor=north] (label1) at (x8r) {$x_{N}^{\iR}$};
        \node[anchor=south] (label1) at ($(x8l)!0.5!(x8r)$) {$\ell_N$};
        \node[anchor=north] (label1) at (s1) {$s_1$};
        \node[anchor=north] (label1) at (s2) {$s_2$};
        \node[anchor=north] (label1) at (s3) {$s_3$};
        \node[anchor=north] (label1) at (s4) {$s_4$};
        \node[anchor=north] (label1) at (s5) {$s_{N-3}$};
        \node[anchor=north] (label1) at (s6) {$s_{N-2}$};
        \node[anchor=north] (label1) at (s7) {$s_{N-1}$};
    \end{tikzpicture}
    \end{adjustbox}
    \caption{A chain of $N$ resonators, with lengths
    $(\ell_i)_{1\leq i\leq N}$ and spacings $(s_{i})_{1\leq i\leq N-1}$ (which will be chosen to alternate between two distinct values, as depicted).}
    \label{fig:setting}
\end{figure}

We use 
\begin{align*}
   D\coloneqq \bigcup_{i=1}^N(x_i^{\iL},x_i^{\iR})
\end{align*}
to denote the set of subwavelength resonators. In this paper, we only consider systems of identically sized resonators, that is
\begin{align*}
    \ell_i = \ell \in \R_{>0}\text{ for all } 1\leq i\leq N.
\end{align*}
This will simplify the formulas in our subsequent analysis and is sufficient to observe the physical phenomena we are interested in.

In this work, we consider the one-dimensional wave equation propagating in a heterogeneous medium with space-dependent material parameters:
\begin{align}
    \frac{\omega^{2}}{\kappa(x)}u(x) +\frac{\dd}{\dd x}\left( \frac{1}{\rho(x)}\frac{\dd}{\dd
    x}  u(x)\right) =0,\qquad x \in\R.
    \label{eq: gen Strum-Liouville}
\end{align}
The material parameters $\kappa(x)$ and $\rho(x)$ are piecewise constant in the interior and exterior of the resonators
\begin{align*}
    \kappa(x)=
    \begin{dcases}
        \kappa_b, & x\in D,\\
        \kappa, &  x\in\R\setminus D,
    \end{dcases}\quad\text{and}\quad
    \rho(x)=
    \begin{dcases}
        \rho_b, & x\in D,\\
        \rho, &  x\in\R\setminus D,
    \end{dcases}
\end{align*}
where the constants $\rho_b, \rho, \kappa, \kappa_b \in \R_{>0}$. The wave speeds inside the set $D$ of resonators  and inside the background medium $\R\setminus D$, are denoted respectively by $v_b$ and $v$, the wave numbers respectively by $k_b$ and $k$, and the contrast between the densities of the resonators and the background medium by $\delta$:
\begin{align}
    v_b:=\sqrt{\frac{\kappa_b}{\rho_b}}, \qquad v:=\sqrt{\frac{\kappa}{\rho}},\qquad
    k_b:=\frac{\omega}{v_b},\qquad k:=\frac{\omega}{v},\qquad
    \delta:=\frac{\rho_b}{\rho}.
\end{align}

For these step-wise defined material parameters, the wave problem determined by \eqref{eq: gen Strum-Liouville} reduces to the following system of coupled one-dimensional Helmholtz equations:
\begin{align}
    \label{eq: system of coupled equations}
    \begin{dcases}
        \frac{\dd{^2}}{\dd x^2}u(x)+ \frac{\omega^2}{v^2}u(x) = 0, & x\in \R \setminus D ,\\
        \frac{\dd{^2}}{\dd x^2}u(x)+ \frac{\omega^2}{v_b^2}u(x) = 0, & x\in D ,\\
        u\vert_{\iR}(x^{\iLR}_{{i}}) - u\vert_{\iL}(x^{\iLR}_{{i}}) = 0, &  1\leq i\leq N ,\\
        \left.\frac{\dd u}{\dd x}\right\vert_{\iR}(x^{\iL}_{{i}}) - \delta\left.\frac{\dd u}{\dd x}\right\vert_{\iL}(x^{\iL}_{{i}}) = 0, & 1\leq i\leq N ,\\
        \delta\left.\frac{\dd u}{\dd x}\right\vert_{\iR}(x^{\iR}_{{i}}) - \left.\frac{\dd u}{\dd x}\right\vert_{\iL}(x^{\iR}_{{i}}) = 0, & 1\leq i\leq N,\\
        \big(\frac{\dd}{\dd |x|} - \mathrm{i} k \big) u = 0 & \text{for } x \in (-\infty, x^{\iL}_{1}) \cup (x^{\iR}_{N}, +\infty),  \\
    \end{dcases}
\end{align}
where for a one-dimensional function $w$ we denote by
\begin{align*}
    w\vert_{\iL}(x) \coloneqq \lim_{\substack{s\to 0\\ s>0}}w(x-s) \quad \mbox{and} \quad  w\vert_{\iR}(x) \coloneqq \lim_{\substack{s\to 0\\ s>0}}w(x+s)
\end{align*}
if the limits exist.

We are interested in the high-contrast regime characterised by the contrast parameter $\delta$ being small. In this case, there exist resonant modes of the system at subwavelength frequencies, for which the size of the resonators is smaller than the wavelength in the background medium.
\begin{definition}
    A resonance $\omega(\delta)\in\C$ for which there exists a non-trivial solution to \eqref{eq: system of coupled equations} is called \emph{subwavelength} in the high-contrast regime if
    \begin{align*}
        \omega(\delta)\to 0 \quad \text{as}\quad \delta \to 0.
    \end{align*}
\end{definition}
One consequence of this asymptotic ansatz is that it lends itself to characterisation using asymptotic analysis \cite{ammari.davies.ea2021Functional}. This limit recovers subwavelength resonances, while keeping the size of the resonators fixed. 

In \cite{feppon.cheng.ea2023Subwavelength}, an asymptotic analysis in the subwavelength limit was performed on the system of one-dimensional subwavelength resonators considered here. It was shown that the leading-order behavior of the resonances is given by the eigenpairs of the \emph{capacitance matrix}:
\begin{align}\label{eq: general capacitance matrix}
    \mathcal{C}\coloneqq \begin{pmatrix}
        \frac{1}{s_1} & -\frac{1}{s_1}\\
        -\frac{1}{s_1} & \frac{1}{s_1} + \frac{1}{s_2} & -\frac{1}{s_2}\\
         & -\frac{1}{s_2} & \frac{1}{s_2} + \frac{1}{s_3} & -\frac{1}{s_3}\\
         && \ddots & \ddots & \ddots \\
         &&& -\frac{1}{s_{N-2}} & \frac{1}{s_{N-2}} + \frac{1}{s_{N-1}} & -\frac{1}{s_{N-1}}\\
         &&&& -\frac{1}{s_{N-1}} & \frac{1}{s_{N-1}}
    \end{pmatrix}.
\end{align}
This is a modified version of the conventional capacitance matrix that is often used to characterise many-body low-frequency resonance problems. We summaries the findings of \cite{feppon.cheng.ea2023Subwavelength} here below.

\begin{proposition}\label{prop: reduction to capacitance matrix}
Consider a system of $N$ subwavelength resonators with size $\ell$ and spacings $s_i$ for $1\leq i \leq N-1$. Assume that the eigenvalues of $\mathcal{C}$ are simple. Then, the $N$ subwavelength resonant frequencies $\omega_i$ of \eqref{eq: system of coupled equations} satisfy to the first order
    \begin{align*}
        \omega_i =  v_b\sqrt{\delta\lambda_i} + \BO(\delta),
    \end{align*}
    where $(\lambda_i)_{1\leq i\leq N}$ are the eigenvalues of the eigenvalue problem
\begin{equation}
\label{eq:eigevalue problem capacitance matrix}
\mathcal{C} \bm a_i = \lambda_i \ell \bm a_i\qquad 1\leq i\leq N.
\end{equation}
Furthermore, let $u_i(x)$ be a subwavelength eigenmode corresponding to $\omega_i$ and let $\bm a_i$ be the corresponding eigenvector of $\mathcal{C}$. Then
\begin{align*}
    u_i(x) = \sum_{j=1}^N\bm a_i^{(j)}V_j(x) + \BO(\delta)
\end{align*}
where $\bm a^{(j)}_i$ denotes the $j$-th entry of the eigenvector $\bm a_i$ and $V_j(x)$ is piecewise linear, supported in $(x_{j-1}^{\iR},x_{j+1}^{\iL})$ and $V_j(x)=1$ for $x\in(x_{j}^{\iL},x_{j}^{\iR})$.
\end{proposition}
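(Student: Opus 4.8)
The plan is to reproduce the high-contrast subwavelength asymptotic analysis of \cite{feppon.cheng.ea2023Subwavelength}, whose mechanism is simple: as $\delta\to 0$ the field becomes constant inside each resonator and piecewise affine in the gaps, and the discrete balance law coupling the constant values is encoded exactly by $\mathcal{C}$. First I would fix the scaling. Since a subwavelength resonance satisfies $\omega(\delta)\to 0$, both the interior and exterior wavenumbers $k_b=\omega/v_b$ and $k=\omega/v$ are small; the a priori ansatz $\omega=\BO(\sqrt{\delta})$ (vindicated a posteriori by the computation below) gives $k_b^2=\BO(\delta)$. On each resonator $D_i$ the field solves $u''+k_b^2u=0$, so $u''=\BO(\delta)$ and $u$ is constant to leading order; write $u\approx u_i=:\bm a^{(i)}$ on $D_i$. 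On each bounded gap and on the two semi-infinite exterior components the field solves $u''+k^2u=0$ with $k$ small, hence is affine to leading order: on $(x_i^{\iR},x_{i+1}^{\iL})$ it interpolates linearly between $u_i$ and $u_{i+1}$, while the radiation condition forces the exterior derivative to vanish at leading order past the end resonators. This already identifies the leading profile with $\sum_j\bm a^{(j)}V_j(x)$, where $V_j$ is the tent function described in the statement.

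Next I would derive the coupling. Integrating $u''+k_b^2u=0$ over $D_i$ gives $u'\vert_{\iL}(x_i^{\iR})-u'\vert_{\iR}(x_i^{\iL})+k_b^2\int_{D_i}u\approx 0$. The two flux transmission conditions convert the interior one-sided derivatives into $\delta$ times the exterior derivatives at the endpoints, and the affine profile in the adjacent gaps evaluates those exterior slopes as $(u_{i+1}-u_i)/s_i$ and $(u_i-u_{i-1})/s_{i-1}$ (with the obvious one-sided modifications at $i=1,N$, where the radiation condition kills the outer term). Using $\int_{D_i}u\approx\ell u_i$ and $k_b^2=\omega^2/v_b^2$, this becomes
\[
\delta\Big[\tfrac{u_i-u_{i-1}}{s_{i-1}}-\tfrac{u_{i+1}-u_i}{s_i}\Big]=\frac{\omega^2}{v_b^2}\,\ell\,u_i,
\]
whose left-hand side is exactly $(\delta\,\mathcal{C}\bm a)_i$. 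Hence, to leading order, $\mathcal{C}\bm a=\lambda\ell\bm a$ with $\lambda=\omega^2/(\delta v_b^2)$, and solving for $\omega$ yields $\omega=v_b\sqrt{\delta\lambda}+\BO(\delta)$ together with the claimed eigenmode reconstruction $u_i(x)=\sum_j\bm a_i^{(j)}V_j(x)+\BO(\delta)$.

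The main obstacle is not this formal computation but the rigorous control of the $\BO(\delta)$ remainders: one must show that the exact resonance and eigenmode differ from the leading-order quantities by $\BO(\delta)$ uniformly, rather than merely that they solve the reduced problem formally. This is where I would invoke \cite{feppon.cheng.ea2023Subwavelength}, recasting the transmission problem via single-layer potentials and applying an analytic-Fredholm / Gohberg--Sigal argument to the characteristic equation (equivalently, a matched-asymptotics expansion carrying explicit remainder bounds). The simplicity hypothesis on the spectrum of $\mathcal{C}$ is what makes the final bookkeeping clean: it guarantees that the leading eigenvalues $\lambda_i$ are isolated, so the square-root branches $v_b\sqrt{\delta\lambda_i}$ neither coincide nor cross, and standard perturbation theory assigns exactly one subwavelength resonance $\omega_i$ to each, with the stated first-order expansion.
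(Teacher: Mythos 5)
Your proposal is correct, but be aware that the paper itself offers no proof of \cref{prop: reduction to capacitance matrix}: the statement is presented explicitly as a summary of the results of \cite{feppon.cheng.ea2023Subwavelength}, so there is no internal argument to compare against. What you have done is reconstruct the formal mechanism underlying that reference, and your computation is accurate: the scaling $\omega=\BO(\sqrt{\delta})$ makes $u$ constant to leading order on each $D_i$ and affine in each gap, the radiation condition makes the exterior slopes $\BO(\sqrt{\delta})$ (hence $\BO(\delta^{3/2})$ after multiplication by $\delta$ in the flux conditions, negligible at the retained order), and integrating $u''+k_b^2u=0$ over $D_i$ together with the two transmission conditions yields exactly $\delta(\mathcal{C}\bm a)_i=(\omega^2/v_b^2)\,\ell\,u_i$ for every row, including rows $1$ and $N$, which is the eigenvalue problem \eqref{eq:eigevalue problem capacitance matrix} with $\lambda=\omega^2/(\delta v_b^2)$ and gives $\omega=v_b\sqrt{\delta\lambda}+\BO(\delta)$ and the tent-function reconstruction of the mode. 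You are also right about where the real work lies: the formal derivation does not by itself control the $\BO(\delta)$ remainders, nor prove that each simple eigenvalue of $\mathcal{C}$ corresponds to exactly one subwavelength resonance, and the layer-potential and Gohberg--Sigal machinery you invoke for this is precisely the content of \cite{feppon.cheng.ea2023Subwavelength}. Since you defer rigour to the same reference that the paper cites, your proposal is at least as complete as the paper's own treatment; the only caveat is that, as a standalone proof, it remains a formal asymptotic computation plus a citation rather than a self-contained rigorous derivation.
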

\cref{prop: reduction to capacitance matrix} states that eigenvalues and eigenvectors of the capacitance matrix stand in a one-to-one correspondence with subwavelength resonant frequencies and modes of the associated system.
\subsection{Dimer chains with a defect}

Systems of repeated dimers (that is $s_i=s_{i-2}$ for $3\leq i\leq N$) are of particular interest as the corresponding infinite structure can be studied with Floquet--Bloch band theory \cite{ammari.ea2023Edge}; when the two repeating separation distances are distinct, this  provides a non-trivial example of a band gap between the subwavelength spectral bands. Here, we consider systems of dimers with a defect in the geometric structure, so that at some point the repeating pattern of alternating separation distances is broken. This is inspired by the famous Su-Schrieffer-Heeger (SSH) model from quantum settings and is the canonical example of a topologically protected interface mode \cite{original_ssh}. It is a system of $N=4m+1$ for $m\in\N$ resonators such that
\begin{align*}
    s_i &= s_{i-2} \quad \text{for } 3\leq i \leq 2m,\\
    s_i &= s_{i+2} \quad \text{for } 2m+1\leq i \leq 4m-2,
\end{align*}
where we typically assume $s_1=s_{2m+2}$ and $s_2=s_{2m+1}$ so that the system is symmetric with respect to the center of the $(2m+1)$-th resonator with spacings $s_1$ and $s_2$. A sketch of this system with a defect is shown in \cref{fig: geometrical defect}.

\begin{figure}[h]
    \centering
    \begin{adjustbox}{width=\textwidth}
        \begin{tikzpicture}
        \draw[-,thick,dotted] (-.5,-1) -- (-.5,2);
        \draw[|-|,dashed] (0,1) -- (1,1);
        \node[above] at (0.5,1) {$s_2$};
        \draw[ultra thick] (1,0) -- (2,0);
        \node[below] at (1.5,0) {$D_{2m+2}$};
        \draw[-,dotted] (1,0) -- (1,1);
        
        \draw[|-|,dashed] (2,1) -- (3,1);
        \node[above] at (2.5,1) {$s_1$};
        \draw[ultra thick] (3,0) -- (4,0);
        \node[below] at (3.5,0) {$D_{2m+3}$};
        \draw[-,dotted] (2,0) -- (2,1);
        \draw[-,dotted] (3,0) -- (3,1);
        \draw[-,dotted] (4,0) -- (4,1);
        
        \begin{scope}[shift={(+4,0)}]
        \draw[|-|,dashed] (0,1) -- (1,1);
        \node[above] at (0.5,1) {$s_2$};
        \draw[ultra thick] (1,0) -- (2,0);
        \node[below] at (1.5,0) {$D_{2m+4}$};
        \draw[-,dotted] (1,0) -- (1,1);
        \node at (2.5,.5) {\dots};
        \end{scope}

        \begin{scope}[shift={(+6,0)}]
        \draw[ultra thick] (1,0) -- (2,0);
        \node[below] at (1.5,0) {$D_{4m-2}$};
        
        \draw[|-|,dashed] (2,1) -- (3,1);
        \node[above] at (2.5,1) {$s_2$};
        \draw[ultra thick] (3,0) -- (4,0);
        \node[below] at (3.5,0) {$D_{4m}$};
        \draw[-,dotted] (2,0) -- (2,1);
        \draw[-,dotted] (3,0) -- (3,1);
        \draw[-,dotted] (4,0) -- (4,1);
\begin{scope}[shift={(+2,0)}]
        \draw[|-|,dashed] (2,1) -- (3,1);
        \node[above] at (2.5,1) {$s_1$};
        \draw[ultra thick] (3,0) -- (4,0);
        \node[below] at (3.5,0) {$D_{4m+1}$};
        \draw[-,dotted] (2,0) -- (2,1);
        \draw[-,dotted] (3,0) -- (3,1);
        \end{scope}
        \end{scope}
        
\begin{scope}[shift={(-4,0)}]

        \draw[|-|,dashed] (0,1) -- (1,1);
        \node[above] at (0.5,1) {$s_1$};
        \draw[ultra thick] (1,0) -- (2,0);
        \node[below] at (1.5,0) {$D_{2m}$};
        \draw[-,dotted] (1,0) -- (1,1);
        
        \draw[|-|,dashed] (2,1) -- (3,1);
        \node[above] at (2.5,1) {$s_2$};
        \draw[ultra thick] (3,0) -- (4,0);
        \node[below,fill=white] at (3.5,0) {$D_{2m+1}$};
        \draw[-,dotted] (2,0) -- (2,1);
        \draw[-,dotted] (3,0) -- (3,1);
        \draw[-,dotted] (4,0) -- (4,1);
        \end{scope}
        
        \begin{scope}[shift={(-8,0)}]
        \node at (.5,.5) {\dots};
        \draw[ultra thick] (1,0) -- (2,0);
        \node[below] at (1.5,0) {$D_{2m-2}$};
        
        \draw[|-|,dashed] (2,1) -- (3,1);
        \node[above] at (2.5,1) {$s_2$};
        \draw[ultra thick] (3,0) -- (4,0);
        \node[below] at (3.5,0) {$D_{2m-3}$};
        \draw[-,dotted] (2,0) -- (2,1);
        \draw[-,dotted] (3,0) -- (3,1);
        \draw[-,dotted] (4,0) -- (4,1);
        \end{scope}

        \begin{scope}[shift={(-14,0)}]
        \draw[ultra thick] (1,0) -- (2,0);
        \node[below] at (1.5,0) {$D_{1}$};
        
        \draw[|-|,dashed] (2,1) -- (3,1);
        \node[above] at (2.5,1) {$s_1$};
        \draw[ultra thick] (3,0) -- (4,0);
        \node[below] at (3.5,0) {$D_{2}$};
        \draw[-,dotted] (2,0) -- (2,1);
        \draw[-,dotted] (3,0) -- (3,1);
        \draw[-,dotted] (4,0) -- (4,1);
\begin{scope}[shift={(+2,0)}]
        \draw[|-|,dashed] (2,1) -- (3,1);
        \node[above] at (2.5,1) {$s_2$};
        \draw[ultra thick] (3,0) -- (4,0);
        \node[below] at (3.5,0) {$D_{3}$};
        \draw[-,dotted] (2,0) -- (2,1);
        \draw[-,dotted] (3,0) -- (3,1);
        \end{scope}
        \end{scope}
        
        \end{tikzpicture}
    \end{adjustbox}
    \caption{Dimer structure with a defect.}
    \label{fig: geometrical defect}
\end{figure}

For such systems the geometric symmetries mean that the capacitance matrix from \eqref{eq: general capacitance matrix} has the following tridiagonal block structure:

\setcounter{MaxMatrixCols}{20}
\begin{align}
\label{eq: strucutre capacitance matrix}
    \mathcal{C} = \begin{pNiceMatrix}
    \Block[draw,fill=blue!40,rounded-corners]{7-7}{} \tilde{\alpha} & \beta_{1} &&&&&&&&&\\
\beta_{1} & \alpha & \beta_{2}  &&&&&&&&\\
& \beta_{2} & \alpha & \beta_{1}  &&&&&&&\\
       && \ddots     & \ddots     & \ddots     &&&&&&\\
       &&& \beta_{2} & \alpha & \beta_{1}  &&&&&\\
       &&&& \beta_{1} & \alpha & \beta_{2}  &&&&\\
       &&&&& \beta_{2} & \Block[draw,fill=red!40,rounded-corners]{7-7}{}\eta & \beta_{2}  &&&\\
       &&&&&& \beta_{2} & \alpha & \beta_{1}  &&\\
       &&&&& && \beta_{1} & \alpha & \beta_{2}  &\\
       &&&&&&&& \ddots     & \ddots & \ddots     &\\
       &&&&&&&&& \beta_{1} & \alpha & \beta_{2}  \\
       &&&&&&&&& & \beta_{2} & \alpha & \beta_{1}  \\
       &&&&&&&&&&& \beta_{1} & \tilde{\alpha}
    \end{pNiceMatrix}
\end{align}
%
%
where
\begin{align}
\label{eq: translation alpha beta to s1 s2}
    \beta_1 = -s_1\inv,\quad \beta_2 = -s_2\inv,\quad \alpha = s_1\inv + s_2\inv,\quad \eta=2s_2\inv,\quad \tilde{\alpha} = s_1^{-1}.
\end{align}
It will be useful to have notation for the top left $(2m+1)\times(2m+1)$-submatrix $C_1
$ and the bottom right $(2m+1)\times(2m+1)$-submatrix $C_2
$ (as highlighted by the shading in \eqref{eq: strucutre capacitance matrix}). 
\cref{fig: eva eve} shows the spectrum and the eigenvectors of the capacitance matrix \eqref{eq: strucutre capacitance matrix}. It is immediately clear that there exists a spectral gap and there exists a localised eigenmode associated to an eigenvalue in the gap.

\begin{figure}[h]
    \centering
    \begin{subfigure}[t]{0.48\textwidth}
    \centering
    \includegraphics[height=0.76\textwidth]{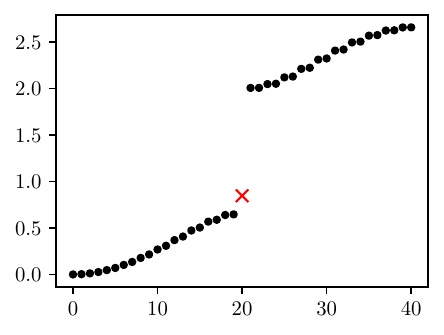}
    \caption{Eigenvalues of \eqref{eq: strucutre capacitance matrix}. As red cross a specific eigenvalue lying isolated from the others.}
    \label{fig: eva}
    \end{subfigure}\hfill
    \begin{subfigure}[t]{0.48\textwidth}
    \centering
    \includegraphics[height=0.76\textwidth]{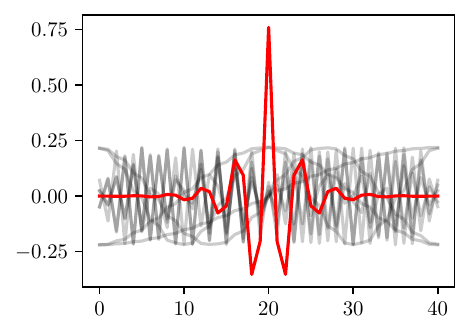}
    \caption{A selection of eigenvectors of $\mathcal{C}$ from \eqref{eq: strucutre capacitance matrix}. All eigenvectors are superimposed and with unit norm. The red solid eigenvector corresponds to the red cross eigenvalue in \eqref{fig: eva}.}
    \label{fig: eva eve}
    \end{subfigure}
    
    \caption{Eigenvalues and eigenvectors of the capacitance matrix \eqref{eq: strucutre capacitance matrix} for $N=41, s_1=1$ and $s_2=3$.}
    \label{fig: eigenvector behaviour}
\end{figure}

\section{Perturbed tridiagonal $2$-Toeplitz matrices}  \label{sect3}
In this section we will briefly recall results established in \cite{ammari2023perturbed} about eigenvalues and eigenvectors of tridiagonal $2$-Toeplitz matrices with perturbations on the corners.

Denote by 
\begin{equation}\label{equ:odd2toeplitzcornerperturbed1}
    A_{2k+1}^{(a, b)}(\alpha, \beta_1, \beta_2)\coloneqq\left(\begin{array}{ccccccc}
        \alpha+a & \beta_1 & 0 & 0 & \ldots & 0 & 0\\
        \beta_1 & \alpha & \beta_2 & 0 & \ldots & 0 & 0 \\
        0 & \beta_2 & \alpha & \beta_1 & \ldots & 0 & 0 \\
        \ldots & \ldots & \ldots & \ldots & \ldots & \ldots & \ldots \\
        \ldots & \ldots & \ldots & \ldots & \ldots & \alpha & \beta_2 \\
        0 & 0 & 0 & 0 & \ldots & \beta_2 & \alpha+b
    \end{array}\right) \in \R^{(2k+1)\times (2k+1)}
\end{equation}
and
\begin{equation}\label{equ:even2toeplitzcornerperturbed1}
    A_{2k}^{(a, b)}(\alpha, \beta_1, \beta_2)\coloneqq\left(\begin{array}{ccccccc}
        \alpha+a & \beta_1 & 0 & 0 & \ldots & 0 & 0\\
        \beta_1 & \alpha & \beta_2 & 0 & \ldots & 0 & 0 \\
        0 & \beta_2 & \alpha & \beta_1 & \ldots & 0 & 0 \\
        \ldots & \ldots & \ldots & \ldots & \ldots & \ldots & \ldots \\
        \ldots & \ldots & \ldots & \ldots & \ldots & \alpha & \beta_1 \\
        0 & 0 & 0 & 0 & \ldots & \beta_1 & \alpha+b
    \end{array}\right)\in \R^{2k\times 2k}
\end{equation}
the prototypical tridiagonal $2$-Toeplitz matrices with perturbations on the corners.

\subsection{Eigenvalues}
We define the polynomials 
\begin{equation}\label{equ:defiofpkstar1}
P_k^*(x):=\left(\beta_1 \beta_{2}\right)^k U_k\left(\frac{(x-\alpha)^2-\beta_1^2-\beta_2^2}{2 \beta_{1} \beta_2}\right), 
\end{equation}
where $U_k$ is the Chebyshev polynomial of the second kind, as well as the function
\begin{align}
\label{eq: def y}
    y(z) \coloneqq \frac{z^2-\beta_1^2-\beta_2^2}{2 \beta_{1} \beta_2}.
\end{align}
The following proposition holds. 
\begin{proposition}[Eigenvalues]
    The characteristic polynomials of $A_{2k+1}^{(a, b)}$ and $A_{2k}^{(a, b)}$ are respectively 
\begin{align}\label{equ:eigenpolynomial3}
\chi_{A_{2k+1}^{(a, b)}}(x) = \left(x-\alpha-a-b\right) P_k^*\left(x\right)+\left( ab\left(x-\alpha\right)  -a \beta_{1}^2-b \beta_{2}^2\right) P_{k-1}^*\left(x\right)
\end{align}
and 
\begin{align}\label{equ:eigenpolynomial4}
\chi_{A_{2k}^{(a, b)}}(x)=P_k^*\left(x\right)+\left((a+b)\left(\alpha-x\right)+ab+\beta_{2}^2\right) P_{k-1}^*\left(x\right)+a b \beta_{1}^2 P_{k-2}^*\left(x\right).
\end{align}
\end{proposition}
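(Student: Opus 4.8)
The plan is to compute each characteristic polynomial by reducing it to determinants of \emph{unperturbed} tridiagonal $2$-Toeplitz matrices, which I will then express through $P_k^*$. Write $M \coloneqq xI - A^{(0,0)}$ for the unperturbed matrix of the relevant size $n$ (so $n=2k+1$ or $n=2k$), and observe that $A^{(a,b)} = A^{(0,0)} + a\,e_1 e_1^{\top} + b\,e_n e_n^{\top}$ differs from $A^{(0,0)}$ only in the two corner diagonal entries. Since $\det(M - a\,e_1 e_1^{\top} - b\,e_n e_n^{\top})$ is affine in each of $a$ and $b$ separately, expanding by multilinearity (equivalently, repeated cofactor expansion along the first and last rows) gives the clean identity
\begin{equation*}
\chi_{A^{(a,b)}}(x) = \det M - a\,M_{\hat 1} - b\,M_{\hat n} + ab\,M_{\widehat{1,n}},
\end{equation*}
where $M_{\hat 1}$, $M_{\hat n}$ and $M_{\widehat{1,n}}$ are the principal minors of $M$ obtained by deleting row and column $1$, row and column $n$, and both, respectively. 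Each of these four minors is itself the determinant of an unperturbed tridiagonal $2$-Toeplitz matrix (of sizes $n$, $n-1$, $n-1$ and $n-2$), so the whole problem reduces to evaluating such determinants.

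Second, I would record the unperturbed determinants in terms of $P_k^*$. Starting from the Chebyshev recurrence $U_k = 2y\,U_{k-1} - U_{k-2}$ and the definition \eqref{equ:defiofpkstar1}, one obtains the three-term recurrence $P_k^* = \big((x-\alpha)^2 - \beta_1^2 - \beta_2^2\big)P_{k-1}^* - \beta_1^2\beta_2^2\,P_{k-2}^*$ with $P_0^* = 1$, $P_1^* = (x-\alpha)^2 - \beta_1^2 - \beta_2^2$ and $P_{-1}^* = 0$; note also that $P_k^*$ is symmetric under $\beta_1 \leftrightarrow \beta_2$. The continuant recurrence for tridiagonal determinants alternates its off-diagonal weight between $\beta_1^2$ and $\beta_2^2$, and combining two consecutive steps yields exactly this same second-order recurrence. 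A short induction (checking the base cases is routine) then gives, for the matrix whose off-diagonal pattern begins with $\beta_1$,
\begin{equation*}
\det\!\big(xI - A_{2k}^{(0,0)}\big) = P_k^* + \beta_2^2 P_{k-1}^*, \qquad \det\!\big(xI - A_{2k+1}^{(0,0)}\big) = (x-\alpha)P_k^*,
\end{equation*}
while the $\beta_1 \leftrightarrow \beta_2$ symmetry supplies the companion formulas for minors whose pattern begins with $\beta_2$ (for instance, $M_{\hat 1}$ in the odd case equals $P_k^* + \beta_1^2 P_{k-1}^*$).

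Finally, I would substitute these evaluations into the multilinearity identity and collect terms. In the odd case $n=2k+1$ one has $\det M = (x-\alpha)P_k^*$, $M_{\hat 1} = P_k^* + \beta_1^2 P_{k-1}^*$, $M_{\hat n} = P_k^* + \beta_2^2 P_{k-1}^*$ and $M_{\widehat{1,n}} = (x-\alpha)P_{k-1}^*$, and gathering the coefficients of $P_k^*$ and $P_{k-1}^*$ reproduces \eqref{equ:eigenpolynomial3}; in the even case $n=2k$ the four minors are $P_k^* + \beta_2^2 P_{k-1}^*$, $(x-\alpha)P_{k-1}^*$, $(x-\alpha)P_{k-1}^*$ and $P_{k-1}^* + \beta_1^2 P_{k-2}^*$, which yields \eqref{equ:eigenpolynomial4}. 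The only genuine obstacle is the bookkeeping in the middle step: one must track simultaneously the parity of the size of each minor and which of $\beta_1,\beta_2$ heads its off-diagonal pattern, since deleting the first row and column swaps these two roles. The symmetry of $P_k^*$ in $\beta_1,\beta_2$ is precisely what keeps this manageable, letting all four minors be written uniformly within the single family $\{P_k^*\}$.
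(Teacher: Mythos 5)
Your proposal is correct. I verified each ingredient: the corner-expansion identity $\det(M - a\,e_1e_1^{\top} - b\,e_ne_n^{\top}) = \det M - a\det M_{\hat 1} - b\det M_{\hat n} + ab\det M_{\widehat{1,n}}$ follows from affineness of the determinant in each diagonal entry; the recurrence $P_k^* = \bigl((x-\alpha)^2-\beta_1^2-\beta_2^2\bigr)P_{k-1}^* - \beta_1^2\beta_2^2 P_{k-2}^*$ with $P_{-1}^*=0$, $P_0^*=1$ is exactly what the Chebyshev recurrence gives after scaling; combining two consecutive continuant steps reproduces this recurrence, and the base cases confirm $\det\bigl(xI-A_{2k+1}^{(0,0)}\bigr)=(x-\alpha)P_k^*$ and $\det\bigl(xI-A_{2k}^{(0,0)}\bigr)=P_k^*+\beta_2^2P_{k-1}^*$ (pattern starting with $\beta_1$), with the $\beta_1\leftrightarrow\beta_2$ swap handling minors whose pattern starts with $\beta_2$. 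Your identification of the four minors is right in both parities, and the final collections of terms do yield both claimed formulas; in particular your bookkeeping of which $\beta$ heads each minor's pattern (the one delicate point) is accurate. One remark on the comparison you were asked to make: the paper itself supplies \emph{no} proof of this proposition — Section 3 explicitly recalls it from the cited earlier work on perturbed tridiagonal $2$-Toeplitz matrices — so there is no in-paper argument to measure yours against. What your route buys is a completely elementary, self-contained verification: everything reduces to the bilinearity of the determinant in the two perturbed corner entries plus the three-term continuant recurrence, with the symmetry of $P_k^*$ in $(\beta_1,\beta_2)$ keeping all four minors inside a single polynomial family.
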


\subsection{Eigenvectors}


We start by defining two families of polynomials $\widehat p_{k+1}^{(\xi_{p}, \xi_{q})}(x)$ and $\widehat q_{k+1}^{(\xi_{p}, \xi_{q})}(x)$ as solutions to the recursion relations

\begin{equation}
\begin{aligned}\label{equ:Chebyshevrecurrence1}
 &\widehat p_{0}^{(\xi_{p}, \xi_{q})}(\mu) =\xi_{p},\quad  \widehat p_{1}^{(\xi_{p}, \xi_{q})}(\mu) =2\mu\xi_{p}+ \frac{\xi_{p}-\xi_{q}}{\beta},\\
 &\widehat p_{k+1}^{(\xi_{p}, \xi_{q})}(\mu)=2\mu \widehat p_k^{(\xi_{p}, \xi_{q})}(\mu)-\widehat p_{k-1}^{(\xi_{p}, \xi_{q})}(\mu),
\end{aligned}
\end{equation}
and
\begin{equation}
\begin{aligned}\label{equ:Chebyshevrecurrence2}
 & \widehat q_{0}^{(\xi_{p}, \xi_{q})}(\mu) =\xi_{q},\quad  \widehat q_{1}^{(\xi_{p}, \xi_{q})}(\mu) =(2\mu+\beta)\xi_{p}+\frac{\xi_{p}-\xi_{q}}{\beta},\\
 &\widehat q_{k+1}^{(\xi_{p}, \xi_{q})}(\mu)=2\mu \widehat q_k^{(\xi_{p}, \xi_{q})}(\mu)-\widehat q_{k-1}^{(\xi_{p}, \xi_{q})}(\mu),
\end{aligned}
\end{equation}
where $\beta=\beta_2/\beta_1$.
Then we state the following result. 
\begin{proposition}\label{thm: eigenvectors of A2k+1 and A2k}
Let $\lambda$ be an eigenvalue of $A_{2k+1}^{(a, b)}(\alpha, \beta_1, \beta_2)$. Then, using $\mu\coloneqq y(\lambda)$ where $y$ is as in \eqref{eq: def y}, the eigenvector corresponding to $\lambda$ is given by
\begin{align}\label{eq: eigenvector A2k+1}
\bm x =& \left( \hat q_{0}^{(\xi_p, \xi_{q})}\left(\mu\right),-\frac{1}{\beta_1}\left(\alpha-\lambda\right) \hat p_0^{(\xi_p, \xi_{q})}\left(\mu\right), \hat  q_1^{(\xi_p, \xi_{q})}\left(\mu\right), -\frac{1}{\beta_1} \left(\alpha-\lambda\right) \hat p_1^{(\xi_p, \xi_{q})}\left(\mu\right), \right. \nonumber\\
	&\qquad  \left.  \ldots, -\frac{1}{\beta_1} \left(\alpha-\lambda\right) \hat p_{k-1}^{(\xi_p, \xi_{q})}\left(\mu\right), 
 \hat  q_{k}^{(\xi_p, \xi_{q})}\left(\mu\right)\right ).
\end{align}
If $\lambda$ is an eigenvalue of $A_{2k}^{(a, b)}(\alpha, \beta_1, \beta_2)$, then the corresponding eigenvector is given by 
\begin{align}\label{eq: eigenvector A2k}
    \bm x =& \left( \hat q_{0}^{(\xi_p, \xi_{q})}\left(\mu\right),-\frac{1}{\beta_1}\left(\alpha-\lambda\right) \hat p_0^{(\xi_p, \xi_{q})}\left(\mu\right), \hat  q_1^{(\xi_p, \xi_{q})}\left(\mu\right), -\frac{1}{\beta_1} \left(\alpha-\lambda\right) \hat p_1^{(\xi_p, \xi_{q})}\left(\mu\right), \right. \nonumber\\
	&\qquad  \left.  \ldots, -\frac{1}{\beta_1} \left(\alpha-\lambda\right) \hat p_{k-1}^{(\xi_p, \xi_{q})}\left(\mu\right)\right ).
\end{align}
In both cases, $\xi_{q}=(\alpha-\lambda), \xi_{p}= (\alpha+a-\lambda)$.
\end{proposition}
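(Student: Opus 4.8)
The plan is to verify directly that the stated vector $\bm x$ lies in the kernel of $A_{2k+1}^{(a,b)}-\lambda I$ (resp.\ $A_{2k}^{(a,b)}-\lambda I$) by checking the relation $(A-\lambda I)\bm x = 0$ one row at a time. Writing $\sigma \coloneqq \alpha-\lambda$, the components are $x_{2j+1}=\hat q_j(\mu)$ and $x_{2j+2}=-\beta_1^{-1}\sigma\,\hat p_j(\mu)$ (with parameters $\xi_p,\xi_q$ throughout), and since the sub- and super-diagonal entries alternate between $\beta_1$ and $\beta_2$, the equations separate naturally into the first row, the interior rows (which further split by parity), and the last row. The whole argument rests on the polynomial identity
\[
\hat q_j^{(\xi_p,\xi_q)}(\mu) = \hat p_j^{(\xi_p,\xi_q)}(\mu) + \beta\,\hat p_{j-1}^{(\xi_p,\xi_q)}(\mu),
\]
which links the two families. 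I would establish it first: extending $\hat p_{-1}\coloneqq -(\xi_p-\xi_q)/\beta$ by the backward recursion, both sides satisfy the common three-term recursion $f_{j+1}=2\mu f_j - f_{j-1}$, and they agree at $j=0$ and $j=1$ by the initial data in \eqref{equ:Chebyshevrecurrence1}--\eqref{equ:Chebyshevrecurrence2}; hence they agree for all $j$.

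Granting this identity, the interior rows become tautologies. The odd interior row at index $2j+1$ reads $\beta_2 x_{2j}+\sigma x_{2j+1}+\beta_1 x_{2j+2}=0$, and substituting the components turns its left-hand side into $\sigma\bigl(\hat q_j-\hat p_j-\beta\hat p_{j-1}\bigr)$, which vanishes by the identity. The even interior row at index $2j+2$ reads $\beta_1 x_{2j+1}+\sigma x_{2j+2}+\beta_2 x_{2j+3}=0$; inserting the components, using the identity to eliminate $\hat q_j,\hat q_{j+1}$, and then $\hat p_{j+1}+\hat p_{j-1}=2\mu\hat p_j$, it collapses (after clearing $\beta_1$) to a multiple of $\bigl(2\mu\beta_1\beta_2+\beta_1^2+\beta_2^2-\sigma^2\bigr)\hat p_j$, which is zero precisely by the definition of $\mu$, namely $2\beta_1\beta_2\,\mu=(\lambda-\alpha)^2-\beta_1^2-\beta_2^2$ (equivalently, $\mu$ is the argument of $U_k$ in $P_k^*$, cf.\ \eqref{equ:defiofpkstar1} and \eqref{eq: def y}). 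The first row $(\alpha+a-\lambda)x_1+\beta_1 x_2=0$ is immediate since $x_1=\hat q_0=\xi_q=\sigma$ and $x_2=-\beta_1^{-1}\sigma\xi_p$ with $\xi_p=\alpha+a-\lambda$. Note that none of these steps requires $\sigma\neq 0$, so no separate treatment of $\lambda=\alpha$ is needed.

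The sole remaining row, the last one, is where the hypothesis that $\lambda$ is an eigenvalue must enter, and it is the main obstacle: its residual does not vanish identically, and I would instead identify it with a nonzero scalar multiple of the characteristic polynomial. Using the identity to express the final two components through $\hat p$, together with the closed form $\hat p_j(\mu)=\xi_p U_j(\mu)+\beta^{-1}(\xi_p-\xi_q)U_{j-1}(\mu)$ (read off from the initial data and the recursion, $U_j$ being the Chebyshev polynomial of the second kind), the last-row residual becomes a combination of $U_k,U_{k-1}$ in the odd case and of $U_{k-1},U_{k-2}$ in the even case. Reducing the surplus index via $U_{m+1}=2\mu U_m-U_{m-1}$ and recalling $P_j^*(\lambda)=(\beta_1\beta_2)^j U_j(\mu)$, a direct comparison matches the residual with $-\sigma(\beta_1\beta_2)^{-k}\chi_{A_{2k+1}^{(a,b)}}(\lambda)$ in the odd case and with $-\sigma\beta_1^{-1}(\beta_1\beta_2)^{-(k-1)}\chi_{A_{2k}^{(a,b)}}(\lambda)$ in the even case, with $\chi$ given by \eqref{equ:eigenpolynomial3}--\eqref{equ:eigenpolynomial4}. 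Since $\lambda$ is an eigenvalue, $\chi(\lambda)=0$, so the last equation holds as well. (Equivalently, one may invoke the general fact that for a tridiagonal matrix with nonzero off-diagonals the terminal residual of the vector solving the first $n-1$ rows is proportional to $\det(A-\lambda I)$.) Finally, as such matrices have geometrically simple eigenvalues and the constructed $\bm x$ is nonzero, it spans the eigenspace, which completes the proof; the even case is handled identically, the only changes being the length of $\bm x$ and the form of the last row.
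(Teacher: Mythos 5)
Your proof is correct, and it takes a genuinely different route from the paper --- indeed, the paper gives no proof of \cref{thm: eigenvectors of A2k+1 and A2k} at all: it is recalled from \cite{ammari2023perturbed}, and the nearest in-paper argument is Step~1 of the proof of \cref{thm: eigenvectors of defect C}, which runs in the opposite direction to yours. There, the system $(\mathcal{C}-\lambda I)\bm v=\bm 0$ is solved by forward substitution down the top rows, the entry recurrences \eqref{eq: recurrence relations xis} are derived, and the recognition of the Chebyshev form is deferred to ``the same manipulations'' as in the proof of Theorem~3.3 of \cite{ammari2023perturbed}, with a symmetry argument closing the proof. Your argument is instead a direct verification of the stated closed form, and its two key ingredients are precisely the ones a derivation-style proof never makes explicit: the bridging identity $\hat q_j^{(\xi_p,\xi_q)}=\hat p_j^{(\xi_p,\xi_q)}+\beta\,\hat p_{j-1}^{(\xi_p,\xi_q)}$ (correctly proved by matching initial data under the common three-term recursion), and the identification of the terminal residual with $-\sigma(\beta_1\beta_2)^{-k}\chi_{A_{2k+1}^{(a,b)}}(\lambda)$, resp.\ $-\sigma\beta_1^{-1}(\beta_1\beta_2)^{-(k-1)}\chi_{A_{2k}^{(a,b)}}(\lambda)$; I checked both constants and the interior-row cancellations, and they are right. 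You also correctly disambiguate the paper's slightly abusive ``$\mu\coloneqq y(\lambda)$'': for the algebra to close one must read $2\beta_1\beta_2\mu=(\lambda-\alpha)^2-\beta_1^2-\beta_2^2$, consistently with \eqref{equ:defiofpkstar1}. What each approach buys: yours is self-contained within this paper, at the cost of taking the characteristic polynomial formulas \eqref{equ:eigenpolynomial3}--\eqref{equ:eigenpolynomial4} as given (that is exactly where the eigenvalue hypothesis enters), and the residual--characteristic-polynomial identity is a useful by-product showing the eigenvalue and eigenvector formulas are mutually consistent; the paper's forward-substitution route constructs the eigenvector without ever invoking \eqref{equ:eigenpolynomial3}--\eqref{equ:eigenpolynomial4}, but outsources the actual Chebyshev manipulations to the reference.

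One caveat, which concerns the statement as much as your proof: your final step asserts that the constructed $\bm x$ is nonzero, and this does use $\lambda\neq\alpha$. If $\sigma=\alpha-\lambda=0$, every even-indexed entry carries the factor $\sigma$, while $\hat q_0=\sigma=0$ and $\hat q_1=\sigma(\sigma^2+a\sigma-\beta_1^2)/(\beta_1\beta_2)=0$, so the recursion forces all $\hat q_j$ to vanish and the formula returns the zero vector. Hence your parenthetical remark that ``no separate treatment of $\lambda=\alpha$ is needed'' is accurate for kernel membership but not for non-triviality. Since in that degenerate case the proposition itself would be asserting that the zero vector is an eigenvector, this is a flag on the statement as recalled from \cite{ammari2023perturbed}, not a flaw specific to your argument; it would suffice to note that $\lambda=\alpha$ is excluded (e.g.\ it is not an eigenvalue in the configurations used later in the paper, where $\beta_1\neq\beta_2$).
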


The last result of this subsection is on the structure of the eigenvectors for the capacitance matrix $\mathcal{C}$ defined by \eqref{eq: strucutre capacitance matrix}. \cref{thm: eigenvectors of defect C} is an adapted version of \cite[Theorem 4.3]{ammari2023perturbed}.

\begin{proposition}\label{thm: eigenvectors of defect C}
Let $(\lambda,\bm v)$ be an eigenpair of $\mathcal{C}$ and let $\mu\coloneqq y(\lambda)$. Then $\bm v$ is given by
\begin{equation}\label{eq: structure eigenvector defect matrix}
\bm v = (\bm x^{(1)},\bm x^{(2)},\dots,\bm x^{(2m)},\bm x^{(2m+1)},(-1)^\sigma\bm x^{(2m)},\dots,(-1)^\sigma\bm x^{(2)},(-1)^\sigma\bm x^{(1)})^{\top},
\end{equation}
where $\bm x\in \R^{2m+1}$ is as in \eqref{eq: eigenvector A2k+1} with $\xi_{q}=(\alpha-\lambda), \xi_{p}= (\alpha+a-\lambda)$ and $\sigma\in\{0,1\}$ except for $\bm x \in \sspan \{\bm 1\}$ where $\sigma=1$.  
\end{proposition}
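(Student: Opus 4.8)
The plan is to exploit the reflection symmetry of $\mathcal{C}$ to collapse the $(4m+1)$-dimensional eigenproblem onto the $(2m+1)$-dimensional three-term recurrence that \cref{thm: eigenvectors of A2k+1 and A2k} has already solved. First I would introduce the exchange matrix $J\in\R^{(4m+1)\times(4m+1)}$ with $J_{ij}=1$ exactly when $i+j=4m+2$. Because the spacings are palindromic about the central resonator (so that $s_{2m}=s_{2m+1}=s_2$ and $s_i=s_{4m+1-i}$ throughout), the matrix in \eqref{eq: strucutre capacitance matrix} is persymmetric, $J\mathcal{C}J=\mathcal{C}$, i.e.\ $J\mathcal{C}=\mathcal{C}J$. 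Since $\beta_1,\beta_2\neq 0$, the matrix $\mathcal{C}$ is an irreducible symmetric tridiagonal (Jacobi) matrix and hence has simple spectrum. Thus, for an eigenpair $(\lambda,\bm v)$, the vector $J\bm v$ is again an eigenvector for $\lambda$, forcing $J\bm v=(-1)^\sigma\bm v$ for some $\sigma\in\{0,1\}$; written out, $v_{4m+2-i}=(-1)^\sigma v_i$, which is exactly the palindromic-up-to-sign form asserted in \eqref{eq: structure eigenvector defect matrix}.

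Next I would reduce to the corner block. Writing $\bm x=(v_1,\dots,v_{2m+1})^{\top}$, rows $1$ through $2m$ of $\mathcal{C}\bm v=\lambda\bm v$ involve only the entries of $\bm x$ and coincide verbatim with rows $1$ through $2m$ of $A_{2m+1}^{(a,b)}(\alpha,\beta_1,\beta_2)\bm x=\lambda\bm x$ once the top-left corner perturbation is set to $a=\tilde\alpha-\alpha=\beta_2$, so that $\xi_p=\alpha+a-\lambda=\tilde\alpha-\lambda$ and $\xi_q=\alpha-\lambda$. Because every off-diagonal entry is nonzero, these $2m$ equations determine $\bm x$ uniquely up to the scalar $v_1$ via the very three-term recurrence that generates $\hat p_k,\hat q_k$ in \eqref{equ:Chebyshevrecurrence1}--\eqref{equ:Chebyshevrecurrence2}; matching the first two entries (the row-$1$ relation $\tilde\alpha v_1+\beta_1 v_2=\lambda v_1$ against $x_2=-\frac{1}{\beta_1}(\alpha-\lambda)\xi_p$) fixes the normalisation. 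Hence $\bm x$ is precisely the vector \eqref{eq: eigenvector A2k+1} with the stated $\xi_p,\xi_q$. Note that the bottom corner $b$ never appears in \eqref{eq: eigenvector A2k+1}: it, together with the central row, only governs which $\lambda$ is admissible and which parity $\sigma$ is realised.

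Finally I would treat the interface, namely the central row $2m+1$. Using $v_{2m+2}=(-1)^\sigma v_{2m}$ and the fact that both off-diagonal entries adjacent to the centre equal $\beta_2$, it reads $\beta_2\bigl(1+(-1)^\sigma\bigr)v_{2m}+\eta v_{2m+1}=\lambda v_{2m+1}$, the compatibility condition that, together with the characteristic polynomial in \eqref{equ:eigenpolynomial3}, selects $\sigma$ for each eigenvalue. The sole case falling outside the generic recurrence is when it returns a constant top half, $\bm x\in\sspan\{\bm 1\}$; the interior rows then force $\lambda=\alpha+\beta_1+\beta_2=0$, and this degenerate eigenvector must be pinned down by substituting directly into the central row. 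The main obstacle is precisely this interface bookkeeping: correctly aligning the period-two (dimer) interleaving of the corner block with the $\hat p/\hat q$ families, confirming $a=\beta_2$, and disentangling the two parities $\sigma$ at the central row, including the exceptional constant eigenvector.
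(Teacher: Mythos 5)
Your proposal is correct and takes essentially the same approach as the paper: its Step 1 likewise uses only the first $2m$ rows and the three-term recurrence to identify the top half $\bm x$ with \eqref{eq: eigenvector A2k+1} (with $a=\beta_2$), and its Step 2 likewise combines the persymmetry $S\mathcal{C}S=\mathcal{C}$, $S=\antidiag(1,\dots,1)$, with simplicity of the eigenvalues to conclude $S\bm v=\pm\bm v$. The only differences are cosmetic: you establish simplicity directly from irreducibility of the symmetric tridiagonal matrix rather than citing the literature, you run the symmetry argument before the recurrence, and you make the central-row compatibility explicit.
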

\begin{proof}
We will show \eqref{eq: structure eigenvector defect matrix} by showing that $(\mathcal{C}-\lambda I)\bm v=\bm 0$ in two steps.\\
\textbf{Step 1.} In the first step, we consider the first $2m+1$ rows of $(\mathcal{C}-\lambda I)\bm v$:
\begin{equation}\label{eq: eigenvectors C setp 1 base}
\left(\begin{array}{cccccc}
  \alpha+a-\lambda  & \beta_{1} & & &&\\
  \beta_{1} & \alpha-\lambda  & \ddots & &&\\
  & \ddots & \ddots&\ddots&&\\
  &  &\beta_{1} & \alpha-\lambda  & \beta_2&\\
  &  & & \beta_{2} & \eta-\lambda&\beta_2
\end{array}\right)\begin{pmatrix}
\bm x\\
\bm * 
\end{pmatrix}= \bm 0,
\end{equation}
where we put $\bm *$ instead of $\bm x^{(2m)}$ to show that we will not use this entry in what follows as we only need the first $2m$ rows of \eqref{eq: eigenvectors C setp 1 base} to determine the $2m+1$ entries of $\bm x$. We write $\bm x$ as
\begin{align}\label{equ:proofeigenvectoreq1}
    \bm x=\left(x_1,\ -\frac{1}{\beta_1}(\alpha-\lambda)x_2,\ \frac{\beta_1}{\beta_2}x_3,\ \cdots,\ -\frac{1}{\beta_1}\left(\frac{\beta_1}{\beta_2}\right)^{m-1} \left(\alpha-\lambda\right)x_{2m}, \left(\frac{\beta_1}{\beta_2}\right)^{m}x_{2m+1}\right).
\end{align}
Considering the first row of \eqref{eq: eigenvectors C setp 1 base}, we can choose
\begin{align*}
    x_1 =(\alpha-\lambda), \quad x_2 =  (\alpha+a -\lambda).
\end{align*}
Then by the second row, we have 
\begin{align*}
    \beta_{1} x_1 -\frac{1}{\beta_{1}} (\alpha-\lambda)^2 x_2 +\beta_1x_3 =0,   
\end{align*}
 which gives 
\[
x_3 =  \frac{(\alpha-\lambda)^2}{\beta_1^2}x_2-x_1.
\]
The third row is
\[
-\frac{\beta_{2}}{\beta_{1}} (\alpha-\lambda)x_2 +(\alpha-\lambda)\left(\frac{\beta_1}{\beta_2}\right)x_3 -(\alpha-\lambda)\left(\frac{\beta_1}{\beta_2}\right)x_4=0,
\] 
and thus, 
\begin{align*}
x_4 = - \frac{\beta_{2}^2}{\beta_{1}^2}x_2+x_3.
\end{align*}
Continuing the process, we can easily verify that 
\begin{align}\label{eq: recurrence relations xis}
x_{2k+1} &= \frac{(\alpha-\lambda)^2}{\beta_1^2} x_{2k}- x_{2k-1}, \quad 1\leq k\leq m,\\
x_{2k}&=x_{2k-1} -\frac{\beta_{2}^2}{\beta_{1}^2} x_{2k-2},\quad \quad 2\leq k\leq m.
\end{align}
Applying to \eqref{eq: recurrence relations xis} the same manipulations as those used in the  proof of \cite[Theorem 3.3]{ammari2023perturbed}, it is now easy to see that $\bm x$ is of the form \eqref{eq: eigenvector A2k+1}.

\textbf{Step 2.} We conclude the proof by considering the last $2m$ rows. By the fact that $\mathcal{C}$ is $S=\antidiag(1,\dots,1)\in\R^{(4m+1)\times (4m+1)}$ symmetric, \emph{i.e.}, $S\mathcal{C}S=\mathcal{C}$, and $\mathcal{C}$ has only simple eigenvalues \cite{feppon.cheng.ea2023Subwavelength} we can immediately conclude that $S\bm v = \pm \bm v$ and thus obtain the desired result. 
\end{proof}


\section{Asymptotic spectral gap and localised interface modes}  \label{sect4}
In this section we will prove the existence of a spectral gap for a defectless structure of dimers. Furthermore, we will demonstrate that a direct relationship exists between an eigenvalue being within the spectral gap and the localisation of its corresponding eigenvector.
\subsection{Asymptotic spectral gap}
We first show the existence of a spectral gap for the capacitance matrix of an unperturbed structure of dimers.

By physical considerations we assume $\beta_1,\beta_2<0$ and $\beta_1<\beta_2$. It will often be useful to shift $\mathcal{C}$ in order to have most diagonal entries zero, to that end we introduce $z(x)\coloneqq x - \alpha$.

\begin{definition}[Spectral bulk and gaps] \label{def:spectralgap}
    Consider a finite structure of resonators. We define the \emph{asymptotic spectral bulk} $\Sigma$ and \emph{asymptotic spectral gap} $\Gamma$ of the structure as the spectral bulk and spectral gap (also known as band gap) of the associated infinite periodic system, respectively.
\end{definition}
The spectral gap and spectral bulk of infinite periodic dimer systems have been computed in \cite[Lemma 5.3]{ammari.ea2023Edge}.
\begin{proposition}
    Consider a system of repeated dimers (without defect) with $N=2m$ resonators. Denote by $C_{N}$ the associated capacitance matrix and let $\Sigma$ be the asymptotic spectral bulk. Then 
    \begin{align*}
        \Sigma = \overline{\lim_{N\to\infty} \sigma(C_{N})} = \left[0,\frac{2}{s_2}\right] \cup \left[\frac{2}{s_1}, \frac{2}{s_1} + \frac{2}{s_2}\right],
    \end{align*}
    where $\lim$ denotes the Hausdorff limit. Consequently, the asymptotic spectral gap is
    \begin{align*}
        \Gamma = \left( \frac{2}{s_2}, \frac{2}{s_1} \right)\subset \R.
    \end{align*}
\end{proposition}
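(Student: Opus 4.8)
The plan is to establish the two equalities in the statement separately: compute the bands of the infinite periodic operator by Floquet--Bloch theory (this pins down the explicit endpoints $0,\,2/s_2,\,2/s_1,\,2/s_1+2/s_2$), and then identify this band union with the Hausdorff limit $\overline{\lim_{N\to\infty}\sigma(C_N)}$ using the Chebyshev description of the finite spectra recalled in \Cref{sect3}. First I would fix the finite object: for $N=2m$ the defectless repeated-dimer matrix is, in the notation of \eqref{equ:even2toeplitzcornerperturbed1}, exactly $C_N=A_{2m}^{(\beta_2,\beta_2)}(\alpha,\beta_1,\beta_2)$, since terminating the chain with the spacing $s_1$ forces the two corner diagonal entries to equal $\tilde\alpha=s_1^{-1}=\alpha+\beta_2$, i.e.\ the corner perturbation is $a=b=\beta_2$.

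For the bands, the infinite chain has a two-resonator unit cell with intracell coupling $\beta_1$ and intercell coupling $\beta_2$, so the Floquet--Bloch transform block-diagonalises the infinite capacitance operator into the $2\times 2$ symbol
\[
\hat C(\theta)=\begin{pmatrix}\alpha & \beta_1+\beta_2 e^{-\i\theta}\\[2pt] \beta_1+\beta_2 e^{\i\theta} & \alpha\end{pmatrix},\qquad \theta\in[0,2\pi),
\]
with eigenvalues $\lambda_\pm(\theta)=\alpha\pm\lvert\beta_1+\beta_2 e^{\i\theta}\rvert$ and $\lvert\beta_1+\beta_2 e^{\i\theta}\rvert^2=\beta_1^2+\beta_2^2+2\beta_1\beta_2\cos\theta$. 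Since $\beta_1\beta_2>0$, the modulus decreases monotonically from $\lvert\beta_1+\beta_2\rvert=s_1^{-1}+s_2^{-1}$ at $\theta=0$ to $\lvert\beta_1-\beta_2\rvert=s_1^{-1}-s_2^{-1}$ at $\theta=\pi$ (here $s_1^{-1}-s_2^{-1}>0$ because $\beta_1<\beta_2<0$). Substituting $\alpha=s_1^{-1}+s_2^{-1}$ gives the lower band $\lambda_-([0,2\pi))=[0,2/s_2]$ and the upper band $\lambda_+([0,2\pi))=[2/s_1,2/s_1+2/s_2]$, and the spectrum of the infinite operator is their union, with the stated gap $\Gamma=(2/s_2,2/s_1)$ being nonempty precisely because $s_1<s_2$.

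It remains to show $\overline{\lim_{N\to\infty}\sigma(C_N)}$ coincides with this band union. Here I would use \eqref{equ:defiofpkstar1}: for the pure tridiagonal $2$-Toeplitz matrix the eigenvalues are the roots of $P_m^*(x)=(\beta_1\beta_2)^m U_m(y(z))$ with $z=x-\alpha$, hence solve $y(z)=\cos\!\big(j\pi/(m+1)\big)$ for $j=1,\dots,m$. Because $z^2=\beta_1^2+\beta_2^2+2\beta_1\beta_2\,y$, each admissible value of $y$ yields the symmetric pair $x=\alpha\pm z$, and a direct computation shows that, in the $x$-variable, $y^{-1}([-1,1])$ is exactly $\Sigma$: the endpoints $y=1$ and $y=-1$ map to the outer edges $0,\,2/s_1+2/s_2$ and the inner edges $2/s_2,\,2/s_1$, respectively. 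As $m\to\infty$ the nodes $\cos(j\pi/(m+1))$ become dense in $[-1,1]$, so these eigenvalues become dense in $\Sigma$, giving the Hausdorff limit for the unperturbed matrix.

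The hard part is passing from the pure Toeplitz matrix to $C_N=A_{2m}^{(\beta_2,\beta_2)}$, i.e.\ controlling the rank-two corner perturbation in both directions: showing that every band point is still approached and, crucially, that no spurious in-gap (edge) eigenvalue survives in the limit. I would insert $a=b=\beta_2$ into \eqref{equ:eigenpolynomial4} and analyse the resulting combination of $P_m^*,P_{m-1}^*,P_{m-2}^*$. On the open gap $\Gamma$ one has $y(z)\notin[-1,1]$, where $U_m$ grows without oscillating, so a sign/growth estimate should show $\chi_{A_{2m}^{(\beta_2,\beta_2)}}$ cannot vanish there for large $m$ — this is the analytic incarnation of the fact that the symmetric, strong-bond-terminated chain sits in the topologically trivial phase and carries no midgap edge mode. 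Density inside the bands then follows from the interlacing of the zeros of $P_m^*$ and $P_{m-1}^*$ (which have no common zero, as $U_m,U_{m-1}$ do not), the perturbation shifting each band eigenvalue by only $\BO(1/m)$. Combining the absence of gap eigenvalues with band density yields $\overline{\lim_{N\to\infty}\sigma(C_N)}=\Sigma$.
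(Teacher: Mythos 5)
Your set-up is fine: the identification $C_N=A_{2m}^{(\beta_2,\beta_2)}(\alpha,\beta_1,\beta_2)$ is exactly the one the paper uses, and your Bloch-symbol computation of the two bands is a legitimate (indeed more self-contained) alternative to the paper's reliance on the known bulk spectrum of the infinite dimer chain. The genuine gap is in the last part, which is where the actual content of the proposition lies. You reduce everything to two claims about the corner-perturbed matrix --- that $\chi_{A_{2m}^{(\beta_2,\beta_2)}}$ has no zeros in $\Gamma$ for large $m$, and that the corner perturbation moves each band eigenvalue by only $\BO(1/m)$ --- and you prove neither: the first is left as ``a sign/growth estimate should show'', and the second does not follow from any tool you invoke (the corner perturbation has rank two but norm $\lvert\beta_2\rvert=\BO(1)$, so Weyl-type bounds give $\BO(1)$ shifts; interlacing with index offset two can rescue density inside the bands, but it can never rule out up to two eigenvalues appearing inside the gap, which is precisely the point at issue). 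Note also a factual slip: by \eqref{equ:eigenpolynomial4} with $a=b=0$, the eigenvalues of the \emph{unperturbed} even-order $2$-Toeplitz matrix are the roots of $P_m^*+\beta_2^2P_{m-1}^*$, not of $P_m^*$, so they are not simply the nodes $y=\cos\bigl(j\pi/(m+1)\bigr)$.

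The frustrating part is that the computation you propose to start (``insert $a=b=\beta_2$ into \eqref{equ:eigenpolynomial4}'') finishes the proof exactly, with no estimates at all; this is what the paper does. Writing $P_k^*(x)=(\beta_1\beta_2)^kU_k(y)$ with $z=x-\alpha$ and using the recurrence $U_m(y)=2yU_{m-1}(y)-U_{m-2}(y)$, the characteristic equation collapses to the product
\begin{align*}
U_{m-1}(y)\,\bigl[(\beta_2-z)+\beta_1 y\bigr]=0 .
\end{align*}
The factor $U_{m-1}(y)=0$ gives $\lambda=\alpha\pm\sqrt{\beta_1^2+2\beta_1\beta_2\cos(k\pi/m)+\beta_2^2}$, $1\leq k\leq m-1$, which lie in the two bands and become dense there as $m\to\infty$; the linear factor gives exactly two further eigenvalues, $z=\beta_2+\beta_1$ and $z=\beta_2-\beta_1$, i.e.\ $\lambda=0$ and $\lambda=2/s_1$, both sitting \emph{on} band edges. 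Hence for every $m$ (not only $m$ large) the spectrum is contained in $\Sigma$, there are no in-gap eigenvalues to exclude, and the Hausdorff limit is read off immediately. If you replace your perturbative plan by this factorisation, your argument becomes complete and essentially coincides with the paper's proof.
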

\begin{proof}
From \eqref{equ:eigenpolynomial4} we see that any eigenvalue $\lambda$ of the capacitance matrix of a structure of dimer without defect --- that is, a capacitance matrix of the form $C_N = A_{2m}^{(\beta_2,\beta_2)}$ --- satisfy   
\begin{align*}
    0 &= P^*_m(\lambda) + (-2\beta_2z+2\beta_2^2)P_{m-1}^*(\lambda)+\beta_2^2\beta_1^2P^*_{m-2}(\lambda)\\
    &= \beta_1^m\beta_2^m U_m(y) + (-2\beta_2z+2\beta_2^2)\beta_1^{m-1}\beta_2^{m-1} U_{m-1}(y)+\beta_1^{m}\beta_2^{m} U_{m-2}(y),
\end{align*}
where for ease of notation we use $z$ for $z(\lambda)$ and $y$ for $y(z(\lambda))$. Using the recurrence relation of the Chebyshev polynomials of the second kind we get that
\begin{align*}
0 &= \beta_1^m\beta_2^m[2yU_{m-1}(y) - U_{m-2}(y)] \\&+ (-2\beta_2z+2\beta_2^2)\beta_1^{m-1}\beta_2^{m-1} U_{m-1}(y)+\beta_1^{m}\beta_2^{m} U_{m-2}(y)\\
\Leftrightarrow 0 &= U_{m-1}(y) [(-2\beta_2z+2\beta_2^2)\beta_1^{m-1}\beta_2^{m-1} + \beta_1^m\beta_2^m2y]\\
\Leftrightarrow 0 &= U_{m-1}(y) [(-z+\beta_2) + \beta_1y].
\end{align*}
So the eigenvalues are given by
\begin{align}
    \lambda= \alpha+(\beta_1+\beta_2),\quad \lambda = \alpha+(\beta_2-\beta_1)\quad \text{and} \quad \lambda = \alpha \pm \sqrt{\beta_1^2+2\beta_1\beta_2\cos\left(\frac{k\pi}{m}\right)+\beta_2^2},
\end{align}
for $1\leq k\leq m-1$. The sought result follows now 
directly from inserting the definition of $\alpha, \beta_1,\beta_2$ in terms of $s_1$ and $s_2$ as in \eqref{eq: translation alpha beta to s1 s2}.
\end{proof}
We will refer to spectral bulk and spectral gap of the physical system and of the capacitance matrix interchangeably, using the identification \cref{prop: reduction to capacitance matrix}.
\subsection{Localised interface modes}
In this subsection, we will prove that an eigenvalue in the gap is associated with an eigenvector that is exponentially localised in the sense that it decays exponentially away from the interface in both directions, within the limits of the finite structure.

\begin{definition}[Localised interface mode]
    Let $v(x)$ be an eigenmode. Then we say that $v$ is a \emph{localised interface mode} at the point $x_0$, if both $\vert v(x-x_0)\vert $ for $x_0<x\in D$ and $\vert v(x_0-x)\vert $  for $x_0>x\in D$ decay exponentially as a function of $x\in D$. The same terminology applies to the corresponding eigenvector of the capacitance matrix.
\end{definition}

\begin{proposition}[Eigenvectors of $\mathcal{C}$]\label{prop: exponential decay and sines}
    Let $\mathcal{C}\in\mathbb{R}^{4m+1 \times 4m+1}$ be the capacitance matrix of a defect structure as illustrated in \cref{fig: geometrical defect} and let $(\lambda,v)$ be an eigenpair of $\mathcal{C}$. Then, there exists $\vert r\vert \geq1$ independent of $m$ and $A,B,\tilde A,\tilde B\in\R$ dependent on $m$ such that
    \begin{description}
        \item[if $y(\lambda)^2>1$]
    \begin{align*}
        v^{ (\vert 2m-2j\vert) } = Ar^j+Br^{-j},\\
        v^{ (\vert 2m-2j-1\vert) } = \tilde A r^j+\tilde B r^{-j};
    \end{align*}
    with $A=\frac{r^{1-m} (c_1 r-c_2)}{r^2-1}=\mathcal{O}(\frac{1}{r^{m}})$ and $B = \frac{r^m (c_2 r-c_1)}{r^2-1}= \mathcal{O}(r^{m-1})$ as $m\to\infty$ for $c_1,c_2\in\R$ independent of $m$. The same asymptotics (with a slight different formula) hold for $\tilde A$ and $\tilde B$;
        \item[if $y(\lambda)^2<1$]
        \begin{align*}
        v^{ (\vert 2m-2j\vert) } = A\cos(j\theta)+B\sin(j\theta),\\
        v^{ (\vert 2m-2j-1\vert) } = \tilde A\cos(j\theta)+\tilde B\sin(j\theta),
    \end{align*}
    with $r=e^{\mathbf{i} \theta}$ and $A,B,\tilde A,\tilde B$ bounded as $m\to\infty$;
    \item[if $y(\lambda)^2=1$] $r=\pm 1$ and
        \begin{align*}
        v^{ (\vert 2m-2j\vert) } = Ar_1^j+Br_1^j\cdot j,\\
        v^{ (\vert 2m-2j-1\vert) } = \tilde A r_1^j+\tilde B r_1^j\cdot j,
    \end{align*}
    with
     $A=\frac{r^{1-m} (c_1 m r-c_1 r-c_2 m)}{m r^2-m-r^2}$ and $B = \frac{r^m (c_2 r-c_1)}{m r^2-m-r^2}$ as $m\to\infty$ for $c_1,c_2\in\R$ independent of $m$. The same asymptotics (with a slight different formula) hold for $\tilde A$ and $\tilde B$.
    \end{description}
\end{proposition}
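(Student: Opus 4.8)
The plan is to read everything off the closed form of the eigenvector that is already available to us. By \Cref{thm: eigenvectors of defect C} the vector $\bm v$ is the symmetric/antisymmetric extension of the half-vector $\bm x\in\R^{2m+1}$ of \eqref{eq: eigenvector A2k+1}, and by \Cref{thm: eigenvectors of A2k+1 and A2k} the odd-indexed entries of $\bm x$ are the values $\widehat q_l^{(\xi_p,\xi_q)}(\mu)$ while the even-indexed entries are $-\frac{1}{\beta_1}(\alpha-\lambda)\widehat p_l^{(\xi_p,\xi_q)}(\mu)$, all evaluated at $\mu=y(\lambda)$. The key structural fact I would use is that both families satisfy the \emph{same} three-term recurrence $f_{l+1}=2\mu f_l-f_{l-1}$ from \eqref{equ:Chebyshevrecurrence1}--\eqref{equ:Chebyshevrecurrence2}. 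This is a linear constant-coefficient recursion whose characteristic polynomial $t^2-2\mu t+1$ has roots $r_{1,2}=\mu\pm\sqrt{\mu^2-1}$ with $r_1r_2=1$; fixing $r:=r_1$ with $\vert r\vert\geq1$ (independent of $m$, since $r$ depends only on $\lambda$), the second root is $r^{-1}$.

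First I would re-index from the polynomial index $l$ to the interface-distance index $j$. Matching parities, $v^{(\vert 2m-2j\vert)}$ is (up to the fixed factor $-\frac{1}{\beta_1}(\alpha-\lambda)$) the reversed sequence $\widehat p_{m-1-j}(\mu)$ and $v^{(\vert 2m-2j-1\vert)}$ is $\widehat q_{m-1-j}(\mu)$, the absolute values encoding the reflection across the defect from \eqref{eq: structure eigenvector defect matrix}. Since $f_{l+1}=2\mu f_l-f_{l-1}$ is invariant under the reversal $l\mapsto m-1-l$, these sequences again satisfy the recurrence in $j$; it therefore suffices to solve $g_{j+1}=2\mu g_j-g_{j-1}$ and determine two coefficients.

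Then comes the case split, which is simply the classification of solutions of a constant-coefficient recursion by the sign of the discriminant $\mu^2-1=y(\lambda)^2-1$. If $y(\lambda)^2>1$ the reciprocal roots $r,r^{-1}$ are real and distinct and $g_j=Ar^j+Br^{-j}$; if $y(\lambda)^2<1$ they are conjugate on the unit circle, so writing $\mu=\cos\theta$, $r=e^{\mathbf{i}\theta}$ gives $g_j=A\cos(j\theta)+B\sin(j\theta)$; and if $y(\lambda)^2=1$ the double root $r=\pm1$ forces the degenerate solution $g_j=(A+Bj)r^j$. In each case the coefficients $A,B$ (and the odd-site analogues $\tilde A,\tilde B$, solving the same recurrence with the $\widehat q$ data) are fixed by the two $m$-independent boundary values $c_1,c_2$ coming from the outer corner $\tilde\alpha$ of $\mathcal{C}$, which depend only on $\lambda$ and on $\alpha,\beta_1,\beta_2,a$. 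Because this boundary data sits at distance $\sim m$ from the interface, Cramer's rule on the resulting $2\times2$ system produces exactly the stated closed forms, e.g.\ $A=\frac{r^{1-m}(c_1r-c_2)}{r^2-1}$ and $B=\frac{r^m(c_2r-c_1)}{r^2-1}$ in the hyperbolic case and the degenerate formulas when $r^2=1$; reading off the $m$-dependence yields $A=\mathcal{O}(r^{-m})$, $B=\mathcal{O}(r^{m-1})$, whereas in the elliptic case all coefficients remain bounded.

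The hard part will not be conceptual but bookkeeping: keeping the two indexings consistent through the absolute values, and identifying precisely which two boundary values to feed into Cramer's rule so that the $r^{\pm m}$ prefactors emerge with the exponents claimed. I would also make sure to record explicitly that $r$, $c_1$ and $c_2$ are independent of $m$ --- they are determined by the fixed parameters and by $\lambda$, treated here as a spectral parameter --- so that all of the $m$-dependence is genuinely isolated in the explicit $r^{\pm m}$ (respectively the bounded trigonometric) factors, which is what makes the subsequent localisation statement meaningful.
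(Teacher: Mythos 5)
Your proposal is correct and takes essentially the same route as the paper's own proof: both reduce, via \cref{thm: eigenvectors of A2k+1 and A2k} and \cref{thm: eigenvectors of defect C}, to the constant-coefficient recurrence $f_{l+1}=2\mu f_l - f_{l-1}$, classify its solutions by the discriminant of $X^2-2\mu X+1$, invoke Vieta's formula to obtain the reciprocal root pair with $\vert r\vert\geq 1$, and fix $A,B,\tilde A,\tilde B$ from the initial conditions in \eqref{equ:Chebyshevrecurrence1}. The only difference is one of detail, not of method: you make explicit the re-indexing $l\mapsto m-1-j$ towards the interface and the Cramer's-rule computation of the coefficients, steps the paper's proof leaves implicit.
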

\begin{proof}
    It is enough to consider the $\widehat p_k^{(\xi_{p}, \xi_{q})}$ and $\widehat q_k^{(\xi_{p}, \xi_{q})}$ part of the eigenvector in \cref{thm: eigenvectors of A2k+1 and A2k} as the rest is a multiplicative factor. We only consider $\widehat p_k^{(\xi_{p}, \xi_{q})}$ as the procedure for $\widehat q_k^{(\xi_{p}, \xi_{q})}$ is exactly the same.
    Let $(\lambda,v)$ be an eigenpair and recall from \eqref{equ:Chebyshevrecurrence1} the recurrence formula
    \begin{align*}
        \widehat p_{k+1}^{(\xi_{p}, \xi_{q})}(\mu) = 2\mu\widehat p_k^{(\xi_{p}, \xi_{q})}(\mu) - 
        \widehat p_{k-1}^{(\xi_{p}, \xi_{q})}(\mu),
    \end{align*}
     where $\mu=y(\lambda)$. This is a two term recurrence formula with characteristic equation 
    \begin{align}\label{eq:characteristic equation}
        X^2-2\mu X + 1 =0
    \end{align}
    having none, one or two roots $r_1$ and $r_2$ in exactly the cases delineated in the proposition. The formulas for the eigenvectors follow. By Vieta's formula, we know that the two roots of \eqref{eq:characteristic equation} satisfy $r_1r_2=1$ so one of them must satisfy $\vert r_i \vert \geq 1$. The constants $A,B,\tilde A,\tilde B\in\R$ follow from the initial conditions \cref{equ:Chebyshevrecurrence1} with the values expressed in \cref{thm: eigenvectors of defect C}.
\end{proof}

The eigenvector in the case when $y(\lambda)^2>1$ is exponentially localized in the interface, as we can rescale the eigenvector to make
\begin{align*}
v^{ (\vert 2m-2j\vert) } = Br^{-j}+ Ar^j,\\
v^{ (\vert 2m-2j-1\vert) } = \tilde B r^{-j}+ \tilde A r^j, 
\end{align*}
where $\mathcal{O}(B) = \mathcal{O}(\tilde B)=\mathcal{O}(1)$ and $\mathcal {O}(Ar^j)=\mathcal {O}(\tilde Ar^j)=o(\frac{1}{r^{m-1}}), j=1,\cdots, 2m$. 

We remark that the three cases presented in \cref{prop: exponential decay and sines} correspond to $\lambda \in \Gamma$, $\lambda \in \Sigma$ and $\lambda \in \partial \Sigma$, respectively. We show this behaviour in \cref{fig: eigenvector behaviour}.

\begin{figure}[h]
    \centering
    \begin{subfigure}[t]{0.32\textwidth}
    \centering
    \includegraphics[height=0.76\textwidth]{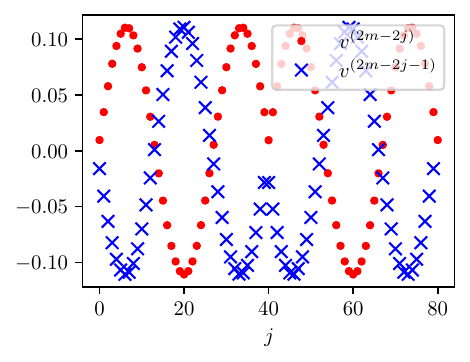}
    \caption{Eigenvector associated to an eigenvalue in the asymptotic spectral bulk.}
    \label{fig: eve eva in bulk}
    \end{subfigure}\hfill
    \begin{subfigure}[t]{0.32\textwidth}
    \centering
    \includegraphics[height=0.76\textwidth]{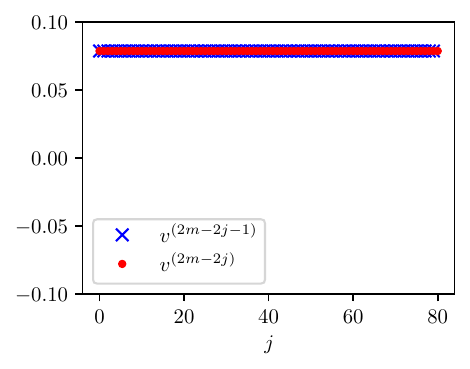}
    \caption{Eigenvector associated to an eigenvalue in the boundary of the asymptotic spectral bulk.}
    \label{fig: eve eva in bulk bdy}
    \end{subfigure}\hfill
    \begin{subfigure}[t]{0.32\textwidth}
    \centering
    \includegraphics[height=0.76\textwidth]{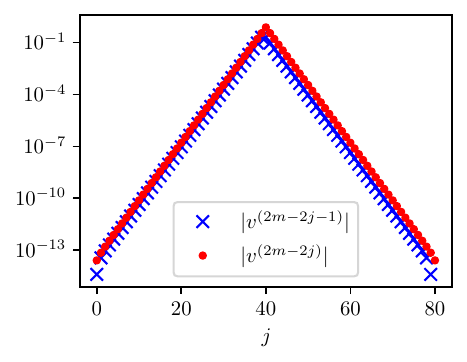}
    \caption{Eigenvector associated to an eigenvalue in the asymptotic spectral gap. $y$ axis in \textit{log}-scale.}
    \label{fig: eve eva in gap}
    \end{subfigure}
    
    \caption{Eigenvector behaviour based on the location of eigenvalue. Computation performed with $N=81, s_1=1, s_2=3$.}
    \label{fig: eigenvector behaviour2}
\end{figure}
In order to show that a localised eigenvector exists, it is thus sufficient and necessary to prove the existence of an eigenvalue in the asymptotic spectral gap. We will do this in the next section.

\section{Existence, uniqueness, and convergence of the eigenvalue in the gap}  \label{sect5}
In this section we will prove the existence of an unique eigenvalue in the gap for the defect structure and consequently the existence of a unique localised interface eigenvector. Furthermore, we analyse the behaviour as the size of the system grows, specifically the limiting behaviour as $N\to\infty$. We will show that the eigenvalue of $\mathcal{C}$ lying in the asymptotic spectral gap converges exponentially fast to a value in the gap.

\subsection{Existence}
We first show the existence of the eigenvalue in the band gap. We start by remarking that by performing Laplace expansion on the top block of rows $1\leq i\leq 2m+1$, we obtain that the determinant of $\mathcal{C}\in\R^{(4m+1)\times (4m+1)}$ is given by
    \begin{align}
        \det \mathcal{C} &=  \det A_{2m+1}^{(a,\eta-\alpha)}(\alpha,\beta_1,\beta_2)\cdot\det A_{2m}^{(0,b)}(\alpha,\beta_1,\beta_2)\nonumber\\
        &- \beta_2^2\det A_{2m}^{(a,0)}(\alpha,\beta_1,\beta_2)\cdot\det A_{2m-1}^{(0,b)}(\alpha,\beta_2,\beta_1),
    \end{align}
where $a=b=\beta_2$. We observe that the last term has $\beta_1$ and $\beta_2$ switched.
Consequently, the characteristic polynomial $p(x)$ of $\mathcal{C}$ is given by
\begin{align}
p(x) &= (\left[\left(x-\alpha-\beta_2-(\beta_1 - \beta_2)\right) P_m^*\left(x\right)+\left( \beta_2(\beta_1 - \beta_2)(x-\alpha)  -\beta_2 \beta_{1}^2-(\beta_1 - \beta_2) \beta_{2}^2\right) P_{m-1}^*\left(x\right)\right] \nonumber\\
&- \beta_2^2\left[\left(x-\alpha-\beta_2\right) P_{m-1}^*\left(x\right)+\left( -\beta_2 \beta_{1}^2\right) P_{m-2}^*\left(x\right)\right])\chi_{A_{2m}^{(a,0)}(\alpha,\beta_1,\beta_2)}(x),
\end{align}
as one swiftly remarks that $A_{2m}^{(0,b)}(\alpha,\beta_1,\beta_2)$ and $A_{2m}^{(a,0)}(\alpha,\beta_1,\beta_2)$ are similar matrices.

For the sake of brevity, we rewrite
\begin{align*}
    p(x) &= \chi_{A_{2m}^{(a,0)}(\alpha,\beta_1,\beta_2)}(x)\left(\left[A P_m^*\left(x\right)+B P_{m-1}^*\left(x\right)\right] 
-\beta_2^2
\left[E P_{m-1}^*\left(x\right)+F P_{m-2}^*\left(x\right)\right]\right),
\end{align*}
where $A,B,E,F$ do not depend on $m$. Thus, $\lambda\in\Gamma$ is an eigenvalue if and only if
\begin{align}
    \left[A P_m^*\left(\lambda\right)+B P_{m-1}^*\left(\lambda\right)\right]
    &= \beta_2^2
    \left[E P_{m-1}^*\left(\lambda\right)+F P_{m-2}^*\left(\lambda\right)\right]\\
    \Leftrightarrow P_m^*\left(\lambda\right) \left[A+B\frac{P_{m-1}^*\left(\lambda\right)}{P_{m}^*\left(\lambda\right)}\right]
    &= \beta_2^2
    P_{m-1}^*\left(\lambda\right) \left[E+F\frac{P_{m-2}^*\left(\lambda\right)}{P_{m-1}^*\left(\lambda\right)}\right].
    \label{eq: lambda in gamma eigenvalues only if}
\end{align}
In the above step we were able to divide by $\chi_{A_{2m}^{(a,0)}}(\lambda)$ and $P_{m-2}^*\left(\lambda\right)$ as we only consider $\lambda\in\Gamma$ so the latter term is non-zero because the Chebyshev polynomials only have roots in $(-1,1)$ and the former term was shown to be non-zero in $\Gamma$ in \cref{prop: exaclty one in gap}. Moreover, since 
\[
U_m(y)=\frac{\left(y+\sqrt{y^2-1}\right)^{m+1}-\left(y-\sqrt{y^2-1}\right)^{m+1}}{2 \sqrt{y^2-1}},
\] 
the limit  $L(\lambda)\coloneqq \lim_{m\to\infty}\frac{P_{m-1}^*\left(\lambda\right)}{P_{m}^*\left(\lambda\right)}$ exists for all $\lambda\in \mathbb R$. Then,
\begin{align*}
    \left[A+BL\right]
    = \beta_2^2
    L
    \left[E+FL\right]\\
    \Leftrightarrow L^2(\beta_2^2F) + L(\beta_2^2E-B) - A = 0, 
\end{align*}
from which we get the condition
\begin{align}\label{eq: condition L is root}
    L(\lambda) &= \frac{B - E \beta_{2}^{2} \pm \sqrt{4 A F \beta_{2}^{2} + B^{2} - 2 B E \beta_{2}^{2} + E^{2} \beta_{2}^{4}}}{2 F \beta_{2}^{2}}. 
\end{align}
On the other hand, by the recurrence formula of $U_m$, for $\lambda \in \Gamma$ where $y(z(\lambda))<-1$,
\begin{align}\label{eq: condition L is limit}
    L(\lambda) = \lim_{m\to\infty}\frac{P_{m-1}^*\left(\lambda \right)}{P_{m}^*\left(\lambda\right)} = \frac{y-\sqrt{y^2-1}}{\beta_1\beta_2}=\frac{z^2-\beta_1^2-\beta_2^2-2 \beta_1 \beta_2 \sqrt{\frac{\left(\beta_1^2+\beta_2^2-z^2\right)^2}{4 \beta_1^2
   \beta_2^2}-1}}{2 \beta_1^2 \beta_2^2},
\end{align}
by using again the shorthand $z$ for $z(\lambda)$ and $y$ for $y(z(\lambda))$. Now, an algebraic manipulation shows that conditions \eqref{eq: condition L is root} and \eqref{eq: condition L is limit} have exactly one common solution $\lambda_0 \in \Gamma$  given by
\begin{align}\label{eq: root of L}
    z(\lambda_0) = \frac{1}{2} \left(-\sqrt{9 \beta_1^2-14 \beta_1 \beta_2+9 \beta_2^2}-\beta_1-\beta_2\right).
\end{align}

\begin{proposition}\label{prop:existencedefectfrequency}
    Consider a perturbed structure of dimers as illustrated in \cref{fig: geometrical defect}. For $N$ large enough there exists at least one localised interface eigenvector of $\mathcal{C}$ with eigenvalue $\lambda_{\mathsf{i}}^{(N)}$ in the band gap $\Gamma$.
\end{proposition}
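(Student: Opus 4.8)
The plan is to turn the spectral condition on the gap $\Gamma$ into a single scalar equation, show that the finite-$m$ equation converges (exponentially) to a limiting equation whose root is the explicit $\lambda_0$ of \eqref{eq: root of L}, and then extract a genuine eigenvalue near $\lambda_0$ by an intermediate value argument; localisation is then free from \cref{prop: exponential decay and sines}. First I would fix $\lambda\in\Gamma$, where $y(z(\lambda))<-1$, so that $P_m^*(\lambda)\neq 0$ (the Chebyshev factors $U_k$ vanish only on $(-1,1)$) and the divisions leading to \eqref{eq: lambda in gamma eigenvalues only if} are legitimate. Writing $h_m(\lambda):=P_{m-1}^*(\lambda)/P_m^*(\lambda)$, the condition that $\lambda\in\Gamma$ be an eigenvalue of $\mathcal{C}$ becomes $\Phi_m(\lambda)=0$, where
\[
\Phi_m(\lambda):=A+B\,h_m(\lambda)-\beta_2^2\,h_m(\lambda)\bigl[E+F\,h_{m-1}(\lambda)\bigr].
\]

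Next I would quantify the convergence $h_m\to L$. Using the closed form of $U_m$ together with $|y|>1$, the ratio $h_m(\lambda)$ converges to $L(\lambda)$ geometrically, uniformly on any compact $K\subset\Gamma$ (the rate degenerates only at $\partial\Gamma$, where $|y|\to 1$, which is exactly why one stays in the interior). Consequently $\Phi_m\to\Phi_\infty$ uniformly on $K$, with
\[
\Phi_\infty(\lambda)=A+B\,L(\lambda)-\beta_2^2\,L(\lambda)\bigl[E+F\,L(\lambda)\bigr],
\]
and by construction $\Phi_\infty(\lambda_0)=0$ for $\lambda_0$ as in \eqref{eq: root of L}.

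The decisive step is to show that $\Phi_\infty$ changes sign at $\lambda_0$, so that the uniform limit forces $\Phi_m$ to have a nearby root. Since $\Phi_\infty(\lambda)=-q(L(\lambda))$ with $q(L):=\beta_2^2FL^2+(\beta_2^2E-B)L-A$ the quadratic whose roots are given by \eqref{eq: condition L is root}, it suffices to check that $L$ is strictly monotone near $\lambda_0$ and that $L(\lambda_0)$ is a simple root of $q$. Monotonicity follows from differentiating \eqref{eq: condition L is limit}: one has $y'(\lambda)=z(\lambda)/(\beta_1\beta_2)$, nonzero because $z(\lambda_0)\neq 0$ (indeed \eqref{eq: root of L} gives $z(\lambda_0)<0$ once $\beta_1\neq\beta_2$), while $\tfrac{d}{dy}\!\left(y-\sqrt{y^2-1}\right)=1-y/\sqrt{y^2-1}\geq 2>0$ for $y<-1$. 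Fixing $\epsilon>0$ with $[\lambda_0-\epsilon,\lambda_0+\epsilon]\subset\Gamma$ and $\Phi_\infty(\lambda_0-\epsilon)\,\Phi_\infty(\lambda_0+\epsilon)<0$, uniform convergence yields the same sign pattern for $\Phi_m$ once $m$ (equivalently $N=4m+1$) is large enough, and the intermediate value theorem produces an eigenvalue $\lambda_{\mathsf i}^{(N)}\in\Gamma$. Because $y(\lambda_{\mathsf i}^{(N)})^2>1$ on $\Gamma$, \cref{prop: exponential decay and sines} shows the associated eigenvector decays exponentially away from the interface, giving the claimed localised interface mode.

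The main obstacle is precisely this transversality: a priori $\lambda_0$ could be a double root of $\Phi_\infty$ (if $L(\lambda_0)$ were a repeated root of $q$, or if $L'(\lambda_0)=0$), in which case the sign-change argument collapses and one would instead need a finer count of the roots of $\Phi_m$. What ultimately rules this out is the explicit formula \eqref{eq: root of L} together with the physical constraints $\beta_1<\beta_2<0$: they guarantee $z(\lambda_0)\neq 0$ (hence $L'(\lambda_0)\neq 0$) and the strict positivity of the discriminant in \eqref{eq: condition L is root} (hence $L(\lambda_0)$ is a simple root of $q$), so that $\lambda_0$ is a simple zero of $\Phi_\infty$ and the intermediate value argument applies.
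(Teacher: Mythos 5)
Your strategy is the same as the paper's: reduce the eigenvalue condition on $\Gamma$ to a scalar equation $\Phi_m(\lambda)=0$ (the paper's $f_m$), pass to the limit $\Phi_\infty$ whose root $\lambda_0$ is given by \eqref{eq: root of L}, establish a sign change of $\Phi_\infty$ across $\lambda_0$, transfer it to $\Phi_m$ for large $m$ by convergence, extract a root by the intermediate value theorem, and invoke \cref{prop: exponential decay and sines} for localisation. The paper's proof follows exactly this skeleton, except that it simply \emph{asserts} the sign change $f_\infty(\lambda_0-\zeta)f_\infty(\lambda_0+\zeta)<0$; you correctly identified this transversality claim as the crux and tried to prove it.

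However, your justification of that step contains a genuine error. You write $\Phi_\infty(\lambda)=-q(L(\lambda))$ with $q(t)=\beta_2^2Ft^2+(\beta_2^2E-B)t-A$ and claim it suffices that $L$ be strictly monotone at $\lambda_0$ and that $L(\lambda_0)$ be a simple root of $q$. This reduction is valid only if $A,B,E,F$ are constants, but they are not: reading them off the factorisation of $p(x)$ gives $A=z-\beta_1$, $B=\beta_2(\beta_1-\beta_2)z-\beta_2\beta_1^2-(\beta_1-\beta_2)\beta_2^2$, $E=z-\beta_2$ with $z=\lambda-\alpha$, and only $F=-\beta_2\beta_1^2$ is constant (the paper states they are independent of $m$, not of $\lambda$). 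With $\lambda$-dependent coefficients your two conditions do not imply a sign change: take $q_\lambda(t)=(t-\lambda)(t+1)$ and $L(\lambda)=\lambda$ at $\lambda_0=0$; then $L$ is strictly monotone, $L(0)=0$ is a simple root of $q_0$, the discriminant $(1+\lambda)^2$ is positive, yet $q_\lambda(L(\lambda))\equiv 0$ never changes sign. What you actually need is
$\frac{d}{d\lambda}\bigl[q_\lambda(L(\lambda))\bigr]\big|_{\lambda_0}\neq 0$,
i.e.\ $L'(\lambda_0)\neq R'(\lambda_0)$, where $R$ is the root branch of $q_\lambda$ that $L$ meets at $\lambda_0$; your computation only yields $L'(\lambda_0)\neq 0$ and positivity of the (likewise $\lambda$-dependent) discriminant, neither of which rules out the cancellation. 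So the one real obstacle you singled out remains open in your argument; closing it requires computing the full $\lambda$-derivative of $\Phi_\infty$ (equivalently, verifying as an algebraic identity in $z$ that $\lambda_0$ is an odd-order zero), which is also the step the paper itself leaves unjustified.
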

\begin{proof}
Denote by 
\begin{align*}
    f_m(\lambda) \coloneqq &\left[A+B\frac{P_{m-1}^*\left(\lambda\right)}{P_{m}^*\left(\lambda\right)}\right] - \beta_2^2\frac{P_{m-1}^*\left(\lambda\right)}{P_m^*\left(\lambda\right)}
     \left[E+F\frac{P_{m-2}^*\left(\lambda\right)}{P_{m-1}^*\left(\lambda\right)}\right],\\
     f_{\infty} \coloneqq & \left[A+BL(\lambda)\right] - \beta_2^2L(\lambda)
     \left[E+FL(\lambda)\right], 
\end{align*}
and note that $f_\infty(\lambda)=\lim_{m\to\infty}f_m(\lambda)$. By (\ref{eq: root of L}) there exists a $\lambda_0$ in the band gap such that $f_\infty(\lambda_0)=0$. Furthermore, by the above formula of $f_\infty(\lambda)$, we have $f_{\infty}(\lambda_0-\zeta)f_{\infty}(\lambda_0+\zeta)<0$ for some $\zeta>0$ satisfying $[\lambda_0-\zeta, \lambda_0+\zeta]\subset \Gamma$. By the sign-preserving property, we have $f_{m}(\lambda_0-\zeta)f_{m}(\lambda_0+\zeta)<0$ for large enough $m$, which proves the existence of roots of $f_{m}(\lambda)$ in the band gap $\Gamma$.  By \cref{prop: exponential decay and sines} the corresponding eigenvector is a localised interface eigenvector.
\end{proof}



\subsection{Uniqueness}
We now show the uniqueness of the eigenvalue in the band gap. We will use the following result about the monotonicity of Chebyshev polynomials of the second kind.
\begin{lemma}\label{lemma: monotonicity_Chebyshev}
Let $k\in\N$, then
\begin{align*}
    \frac{U_{k-1}(x)}{U_k(x)}
\end{align*}
is strictly decreasing for $x\in(-\infty, -1)\cup(1,+\infty)$ for any $k\in\N$.
\end{lemma}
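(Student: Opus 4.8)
The plan is to pass to the hyperbolic parametrisation of $U_k$ that is available precisely outside $[-1,1]$, reducing the claim to a single elementary inequality. Since the zeros of $U_k$ are $\cos\tfrac{j\pi}{k+1}$, $1\leq j\leq k$, all lying in $(-1,1)$, the ratio $h(x)\coloneqq U_{k-1}(x)/U_k(x)$ is well defined and smooth on $(-\infty,-1)\cup(1,+\infty)$. On $(1,\infty)$ I would substitute $x=\cosh t$ with $t>0$, using the identity $U_k(\cosh t)=\sinh\big((k+1)t\big)/\sinh t$, so that
\begin{align*}
    h(\cosh t)=\frac{\sinh(kt)}{\sinh\big((k+1)t\big)}\eqqcolon g(t).
\end{align*}
Because $t\mapsto\cosh t$ is a strictly increasing bijection from $(0,\infty)$ onto $(1,\infty)$, it suffices to prove that $g$ is strictly decreasing for $t>0$.

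Differentiating, the sign of $g'(t)$ is that of the numerator
\begin{align*}
    N(t)=k\cosh(kt)\sinh\big((k+1)t\big)-(k+1)\sinh(kt)\cosh\big((k+1)t\big).
\end{align*}
Applying the product-to-sum identities $\cosh A\sinh B=\tfrac12[\sinh(A+B)+\sinh(B-A)]$ and $\sinh A\cosh B=\tfrac12[\sinh(A+B)+\sinh(A-B)]$ with $A=kt$, $B=(k+1)t$ collapses this to the clean form
\begin{align*}
    N(t)=\tfrac12\Big[(2k+1)\sinh t-\sinh\big((2k+1)t\big)\Big].
\end{align*}
The whole argument now hinges on showing $N(t)<0$, i.e. $\sinh(nt)>n\sinh t$ for $t>0$ and $n=2k+1\geq 3$. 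This is the one inequality to verify, and it is the main (though mild) obstacle: I would note that $\sinh(nt)-n\sinh t$ vanishes at $t=0$ and has derivative $n\big(\cosh(nt)-\cosh t\big)>0$ for $t>0$ and $n\geq 2$, so it is strictly positive on $(0,\infty)$. Hence $N(t)<0$, $g'(t)<0$, and $h$ is strictly decreasing on $(1,\infty)$.

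For the interval $(-\infty,-1)$ I would exploit the parity $U_k(-x)=(-1)^kU_k(x)$, which gives $h(-x)=-h(x)$. Taking $x_1<x_2$ in $(-\infty,-1)$ and writing $y_i=-x_i\in(1,\infty)$, one has $y_1>y_2$, so the already-established monotonicity on $(1,\infty)$ yields $h(y_1)<h(y_2)$ and therefore $h(x_1)=-h(y_1)>-h(y_2)=h(x_2)$. Thus $h$ is strictly decreasing on $(-\infty,-1)$ as well, completing the proof. (Here $k\geq 1$, so that $2k+1\geq 3$ and the ratio is non-degenerate.)
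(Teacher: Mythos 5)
Your proof is correct, and it takes a genuinely different route from the paper's, though the two are structurally parallel. The paper stays in polynomial coordinates: it differentiates the ratio in $x$, invokes the identity $U_k'(x)=\big((k+1)T_{k+1}(x)-xU_k(x)\big)/(x^2-1)$ and the product formulas for $T_\ell(x)U_n(x)$ to collapse the numerator of the derivative, and reduces the claim to the inequality $U_{2k}(x)>2k+1$ for $\vert x\vert>1$, which it asserts without proof. You instead pass to the hyperbolic parametrisation $x=\cosh t$, differentiate in $t$, collapse the numerator with hyperbolic product-to-sum identities, and reduce to $\sinh\big((2k+1)t\big)>(2k+1)\sinh t$. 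These two reductions are in fact the same inequality in different coordinates, since $U_{2k}(\cosh t)=\sinh\big((2k+1)t\big)/\sinh t$; so both proofs hinge on identical content. What your route buys: it is self-contained (you prove the key inequality by the elementary derivative argument $n\big(\cosh(nt)-\cosh t\big)>0$ rather than citing it) and requires only the hyperbolic representation of $U_k$, not the Chebyshev derivative formula or the $T_\ell U_n$ product table. What it costs: the substitution $x=\cosh t$ only covers $(1,\infty)$, so you need the extra parity step $h(-x)=-h(x)$ to transfer the conclusion to $(-\infty,-1)$, whereas the paper's polynomial computation applies uniformly on $\vert x\vert>1$ because $U_{2k}$ is even. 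Both arguments, correctly, establish strict decrease on each of the two intervals separately, which is the right reading of the statement: the odd symmetry $h(-x)=-h(x)$ you exploit also shows that $h$ is negative on $(-\infty,-1)$ and positive on $(1,\infty)$, so it cannot be decreasing across the union as a whole.
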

\begin{proof}
Considering the derivative of $\dfrac{U_{k-1}(x)}{U_k(x)}$, we have 
\begin{equation}\label{equ:chebmonotoneequ1}
\left(\frac{U_{k-1}(x)}{U_k(x)}\right)\pri= \frac{U_{k}(x)U_{k-1}\pri(x)-U_{k}\pri(x)U_{k-1}(x)}{U_{k}^2(x)}.
\end{equation}
A well-known property of Chebyshev polynomials is that
\[
U_k\pri(x)=\frac{(k+1) T_{k+1}(x)-x U_k(x)}{x^2-1},
\]
where $T_{k}(x)$ are the Chebyshev polynomials of the first kind. Thus, to demonstrate the negativity of (\ref{equ:chebmonotoneequ1}) for $\vert x\vert>1$, we only need to show that

\begin{align}\label{eq: condition negativity}
    0 &> U_k(x)[k T_{k}(x)-x U_{k-1}(x)] - [(k+1) T_{k+1}(x)-x U_k(x)]U_{k-1}(x)\nonumber \\&= U_k(x)k T_{k}(x) - (k+1) T_{k+1}(x)U_{k-1}(x).
\end{align}
It is also well-known that 
\begin{equation}\label{equ:chebyshevrelation1}
T_{\ell}(x) U_n(x)= \begin{cases}\frac{1}{2}\left(U_{\ell+n}(x)+U_{n-\ell}(x)\right), & \text { if } n \geq \ell-1, \\ \frac{1}{2}\left(U_{\ell+n}(x)-U_{\ell-n-2}(x)\right), & \text { if } n \leq \ell-2 . \end{cases}
\end{equation}
So that \eqref{eq: condition negativity} becomes
\begin{align*}
   \frac{k}{2}(U_{2k}(x)+U_{0}(x))-\frac{k+1}{2}(U_{2k}(x)-U_{0}(x)) = -\frac{U_{2k}(x)-(2k+1)}{2}.
\end{align*}
By $U_{2k}(x)>2k+1, k>0, x\in (-\infty, -1)\cup(1,+\infty)$ and the proof is complete.
\end{proof}
\begin{proposition}\label{prop: exaclty one in gap}
    There exists at most one eigenvalue of $\mathcal{C}$ as defined in \eqref{eq: strucutre capacitance matrix} lying in the asymptotic spectral gap $\Gamma = (2/s_2,2/s_1)$. In particular, for $m$ large enough, there exists exactly one eigenvalue in $\Gamma$.  
\end{proposition}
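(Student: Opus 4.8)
The plan is to reduce the ``eigenvalue in $\Gamma$'' condition to a single scalar equation in a Chebyshev ratio and then run a monotonicity count powered by \cref{lemma: monotonicity_Chebyshev}. Throughout I write $z=z(\lambda)=\lambda-\alpha$, $y=y(z(\lambda))$, and $\Psi_m(y):=U_{m-1}(y)/U_m(y)$. First recall that on $\Gamma$ one has $y<-1$: the midpoint of $\Gamma$ is exactly $\alpha$ (so $z$ ranges over a symmetric interval), $y$ attains its minimum at $z=0$, and $y=-1$ only at the gap edges $\partial\Sigma$. Hence $U_m(y)\neq0$ on $\Gamma$, and using $U_k(-1)=(-1)^k(k+1)$ together with \cref{lemma: monotonicity_Chebyshev} one gets $\Psi_m(y)\in\bigl(-\tfrac{m}{m+1},0\bigr)\subset(-1,0)$ there. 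Starting from the factorisation $p=\chi_{A_{2m}^{(a,0)}}\cdot g_m$ established before \eqref{eq: lambda in gamma eigenvalues only if}, the first step is to show the companion factor does not vanish on $\Gamma$: rewriting $\chi_{A_{2m}^{(a,0)}}(\lambda)=0$ as $\Psi_m(y)=\beta_1/(z-\beta_2)$, a direct estimate shows the right-hand side never lies in $(-1,0)$ for $\lambda\in\Gamma$, whereas $\Psi_m(y)$ always does. This both supplies the non-vanishing used earlier and confines the gap eigenvalues of $\mathcal C$ to the zeros of $g_m$.

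Second, I would collapse $g_m(\lambda)=0$ to a scalar equation. Dividing \eqref{eq: lambda in gamma eigenvalues only if} by $P_m^*(\lambda)$ and eliminating the ratio $P_{m-2}^*/P_{m-1}^*$ by the recurrence $U_m=2yU_{m-1}-U_{m-2}$ makes the condition \emph{affine} in $\Psi_m$, so it can be solved explicitly as
\[
\Psi_m\bigl(y(z)\bigr)=h(z),\qquad
h(z)=\frac{\beta_1\,(z-\beta_1-\beta_2)}{-z^2+(2\beta_2-\beta_1)\,z+2\beta_1^2+\beta_1\beta_2-\beta_2^2},
\]
with $h$ a rational function of $z$ that is independent of $m$.

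Third is the counting. By \cref{lemma: monotonicity_Chebyshev}, $y\mapsto\Psi_m(y)$ is strictly decreasing on $(-\infty,-1)$; since $z\mapsto y(z)$ is strictly monotone on each of the two half-gaps $(2/s_2,\alpha)$ and $(\alpha,2/s_1)$, the map $z\mapsto\Psi_m(y(z))$ is strictly monotone on each half (increasing on the left, decreasing on the right). A direct check shows $h$ is strictly decreasing on each half, so on the left half $\Psi_m(y(\cdot))-h$ is strictly increasing and can vanish at most once. The delicate point is to exclude a second root on the right half, where both terms decrease. Here I would compare with the limit $\Psi_\infty(y):=\lim_{m\to\infty}\Psi_m(y)$, the root of $t^2-2yt+1=0$ of modulus less than one: the equation $\Psi_\infty(y(z))=h(z)$ is precisely the limiting equation solved in the existence step, whose \emph{unique} root in $\Gamma$ is \eqref{eq: root of L} and lies strictly in the left half. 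Since the fixed-point recursion $\Psi_m=(2y-\Psi_{m-1})^{-1}$ forces $\Psi_m(y)>\Psi_\infty(y)$ for every finite $m$ and every $y<-1$, and since $\Psi_\infty(y(\cdot))-h>0$ on the whole right half, we obtain $\Psi_m(y(\cdot))-h>0$ there, so no root occurs on the right half. Combining the two halves yields at most one eigenvalue of $\mathcal C$ in $\Gamma$.

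Finally, existence is already provided by \cref{prop:existencedefectfrequency} for $m$ large, so together with uniqueness this gives exactly one gap eigenvalue for $m$ large, as claimed. I expect the main obstacle to be exactly the two-to-one nature of $\lambda\mapsto y$ on $\Gamma$ (the gap being centred at $\alpha$): monotonicity of the Chebyshev ratio only controls one half-gap at a time, and the real work is the uniform-in-$m$ sign comparison against the limiting profile $\Psi_\infty$ needed to kill the second candidate root.
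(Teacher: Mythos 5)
Your proposal is correct in substance, but it takes a genuinely different route from the paper's proof, so it is worth comparing the two. The paper does not return to the factorisation $p=\chi_{A_{2m}^{(a,0)}}\cdot g_m$ at this point. Instead, it deletes the central row and column of $\mathcal{C}$, observes that the resulting $4m\times 4m$ compression is block diagonal with blocks $B=A_{2m}^{(\beta_2,0)}$ and $PBP$, $P=\antidiag(1,\dots,1)$, proves via the scalar equation \eqref{eq: condition eigenvalue block} and \cref{lemma: monotonicity_Chebyshev} that $B$ has no spectrum in $\Gamma$, and then concludes by Cauchy interlacing: a Hermitian matrix has at most one more eigenvalue in any open interval than a principal submatrix of one dimension less, so $\mathcal{C}$ has at most one eigenvalue in $\Gamma$. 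Your Step 1 is exactly the paper's analysis of $B$ (your observation that $\beta_1/(z-\beta_2)$ never lies in $(-1,0)$ while $\Psi_m\in(-\tfrac{m}{m+1},0)$ is precisely the content of the paper's two cases $2\beta_2<\beta_1$ and $\beta_1\leq 2\beta_2$), but you deploy it differently, namely to legitimise dividing out the companion factor, and you then count the roots of the defect factor $g_m$ by hand: the exact affine reduction to $\Psi_m(y(z))=h(z)$ (your formula for $h$ is correct; note it simplifies, since the factor $z-(\beta_1+\beta_2)$ cancels, to $h(z)=-\beta_1/(z-\beta_2+2\beta_1)$, whose pole $z=\beta_2-2\beta_1$ lies to the right of $\Gamma$, making its strict monotonicity immediate), monotonicity on the left half-gap, and the uniform comparison $\Psi_m>\Psi_\infty$ on the right half-gap. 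That comparison is sound: the map $t\mapsto(2y-t)^{-1}$ is increasing for $t\in(-1,0)$, fixes $\Psi_\infty$, and $\Psi_1=1/(2y)>\Psi_\infty$, so induction gives $\Psi_m(y)>\Psi_\infty(y)$ for all $m$ and all $y<-1$. Your route costs more work than interlacing, but it yields strictly more information: it shows the gap eigenvalue, when it exists, lies in the left half of $\Gamma$ (i.e.\ $\lambda<\alpha$), consistently with the limit \eqref{eq: root of L}, and it ties the finite-$m$ eigenvalue equation exactly to the limiting equation used for existence and convergence.

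Two details need patching before this is complete. First, the inequality $\Psi_\infty(y(\cdot))-h>0$ on the right half-gap does not follow from the location of the limiting root alone: uniqueness of that root only forces a \emph{constant} sign on the right half, and you must evaluate the sign at one point. This is a one-line check, e.g.\ at $z=0$ one finds $\Psi_\infty(y(0))=\beta_2/(-\beta_1)\cdot(-1)=-\beta_2/\beta_1$ and $h(0)=\beta_1/(\beta_2-2\beta_1)$, whence
\begin{equation*}
\Psi_\infty(y(0))-h(0)=\frac{(\beta_1-\beta_2)^2}{(-\beta_1)\,(\beta_2-2\beta_1)}>0,
\end{equation*}
since $-\beta_1>0$ and $\beta_2-2\beta_1>0$. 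Be aware also that this difference tends to $0$ at the right gap edge $z=\beta_2-\beta_1$, where both $\Psi_\infty(y)$ and $h$ equal $-1$; this is harmless because $\Gamma$ is open, but it shows the positivity cannot be made uniform up to the closed edge, so the argument must be phrased on the open half-gap as you do. Second, solving the affine equation for $\Psi_m$ requires dividing by its coefficient; that coefficient vanishes only at $z=\beta_2-2\beta_1$ and $z=\beta_1+\beta_2$, both outside $\Gamma$, so the division is legitimate, but this should be stated (it is the same computation that shows $h$ has no pole in the gap).
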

\begin{proof}
Considering the compression of $\mathcal{C}$ obtained by removing the central row and column we obtain a block diagonal matrix with blocks given by
\begin{align*}
    B = \begin{pmatrix}
    \alpha + \beta_2 & \beta_{1}\\
\beta_{1} & \alpha & \beta_{2}\\
& \beta_{2} & \alpha & \beta_{1}\\
       && \ddots     & \ddots     & \ddots\\
       &&& \beta_{2} & \alpha & \beta_{1}  \\
       &&&& \beta_{1} & \alpha \\
    \end{pmatrix}\quad \text{and}\quad PBP,
    \end{align*}
where $P = \antidiag(1,\dots,1)$. Remark that the same assumptions as stated before \cref{def:spectralgap} on the coefficients of $B$ hold, except that now $b=0$, that is, the matrix $B$ is of the type $A^{(a,0)}_{2m}$ and its characteristic polynomials is thus given by 
$$
P_m^*\left(x\right)+\left(-\beta_2z+\beta_{2}^2\right) 
P_{m-1}^*\left(x\right).
$$
Consequently, we see from \eqref{equ:defiofpkstar1} that $\lambda$ is an eigenvalue of $B$ if and only if
\begin{align}\label{eq: condition eigenvalue block}
    \underbrace{\frac{-\beta_1\beta_2}{-\beta_2z+\beta_2^2}}_{\eqqcolon L(z)}=\underbrace{\frac{U_{m-1}(y)}{U_m(y)}}_{\eqqcolon R(z)},
\end{align}
by using again the shorthand $z$ for $z(\lambda)$ and $y$ for $y(z(\lambda))$. Remark that $L(z)$ in \eqref{eq: condition eigenvalue block} is strictly monotonous increasing with a pole of order one at $\beta_2$. Furthermore, by \cref{lemma: monotonicity_Chebyshev} $R(z)$ is strictly monotonically increasing on $(\beta_1-\beta_2,0)$ and strictly monotonically decreasing on $(0,\beta_2-\beta_1)$

We first consider the case $2\beta_2 < \beta_1 < \beta_2$, which equates to the pole of $L(z)$ lying outside of the gap $\Gamma$. In this case we have
\begin{align*}
    L(\beta_1-\beta_2) = - \frac{\beta_1}{2\beta_2-\beta_1} &< -\frac{m}{m+1} = R(\beta_1-\beta_2),\\
    L(\beta_2-\beta_1) = -1 &< -\frac{m}{m+1}  = R(\beta_2-\beta_1),
\end{align*}
so that $L(z)<R(z)$ for all $z\in(\beta_1-\beta_2,\beta_2-\beta_1)$.

The case $\beta_1 \leq 2\beta_2$ is similar. The pole $z=\beta_2$ is now in the interval of interest, but now $L(z)>0>R(z)$ for $z\in(\beta_1-\beta_2,\beta_2)$ and $R(z)<L(z)$ for $z\in(\beta_2, \beta_2-\beta_1)$ by the same argument as the one in the previous case. Thus there can be at most one eigenvalue in the gap. For $m$ large enough, by \cref{prop:existencedefectfrequency} there exists exactly one eigenvalue in the gap.
\end{proof}

\begin{remark}
We remark that if one can show the existence of an eigenvector in the band gap $\Gamma$ for the capacitance matrix $\mathcal C$ with general size $m$, then by \cref{prop: exaclty one in gap}, it is unique. 
\end{remark}

\subsection{Convergence}
Last, we show the exponential convergence of the resonant frequency in the gap and conclude the section by the following theorem.

\begin{theorem}\label{thm:existenceofeigenfrquency}
    Consider a perturbed structure of dimers as illustrated in \cref{fig: geometrical defect}. For $N$ large enough there exists a unique interface mode with eigenfrequency $\omega_{\mathsf{i}}^{(N)}$ in the band gap. The associated eigenfrequency $\omega_{\mathsf{i}}^{(N)}$ converges to 
    \begin{align}
        \omega_{\mathsf{i}} = v_b\sqrt{\delta \frac{1}{2} \left(-\sqrt{\frac{9}{s_1^2}- \frac{14}{s_1s_2} + \frac{9}{s_2^2}}+\frac{3}{s_1}+\frac{3}{s_2}\right)}
    \end{align}
    exponentially as $N\to\infty$. In particular, for $N$ big enough,
    \begin{align}\label{eq: error estimate convergence frequency in gap}
        \vert \omega_{\mathsf{i}} - \omega_{\mathsf{i}}^{(N)}\vert 
        < Ae^{-BN},
    \end{align}
    for some $A,B>0$ independent of $N$.
\end{theorem}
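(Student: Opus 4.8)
The plan is to assemble the theorem from the three pieces already in hand. By \Cref{prop:existencedefectfrequency} and \Cref{prop: exaclty one in gap}, for $N=4m+1$ large enough there is exactly one eigenvalue $\lambda_{\mathsf{i}}^{(N)}$ of $\mathcal{C}$ in the gap $\Gamma$, and by \Cref{prop: exponential decay and sines} the case $y(\lambda)^2>1$ (which is precisely the regime $\lambda\in\Gamma$) gives an exponentially localised interface eigenvector; translating back to the physical system via \Cref{prop: reduction to capacitance matrix} yields the corresponding interface mode with eigenfrequency $\omega_{\mathsf{i}}^{(N)}=v_b\sqrt{\delta\lambda_{\mathsf{i}}^{(N)}}+\BO(\delta)$. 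The limiting value $\omega_{\mathsf{i}}$ is obtained by inserting the explicit root $z(\lambda_0)$ from \eqref{eq: root of L} into $\lambda_0=\alpha+z(\lambda_0)$, substituting the definitions $\alpha=s_1^{-1}+s_2^{-1}$, $\beta_1=-s_1^{-1}$, $\beta_2=-s_2^{-1}$ from \eqref{eq: translation alpha beta to s1 s2}, and applying $\omega_{\mathsf{i}}=v_b\sqrt{\delta\lambda_0}$; this is a direct algebraic check that should reproduce the stated formula.

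The substantive work is the exponential convergence estimate \eqref{eq: error estimate convergence frequency in gap}. First I would recall that $\lambda_0$ solves $f_\infty(\lambda)=0$ while $\lambda_{\mathsf{i}}^{(N)}$ solves $f_m(\lambda)=0$, with $f_m\to f_\infty$ pointwise. The key observation is the \emph{rate} of this convergence: since $\lambda\in\Gamma$ forces $y(\lambda)^2>1$, the ratio $P_{m-1}^*(\lambda)/P_m^*(\lambda)$ approaches its limit $L(\lambda)$ exponentially in $m$. Concretely, using
\[
U_m(y)=\frac{\rho^{m+1}-\rho^{-(m+1)}}{\rho-\rho^{-1}},\qquad \rho\coloneqq y+\sqrt{y^2-1},\ |\rho|>1,
\]
one finds $\frac{U_{m-1}(y)}{U_m(y)}=\rho^{-1}\frac{1-\rho^{-2m}}{1-\rho^{-2(m+1)}}=\rho^{-1}+\BO(\rho^{-2m})$, so that $|f_m(\lambda)-f_\infty(\lambda)|\leq C\,|\rho|^{-2m}$ uniformly on a closed subinterval of $\Gamma$ containing $\lambda_0$, where $|\rho|>1$ is bounded below on that subinterval. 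I would then combine this with a lower bound on $|f_\infty'(\lambda_0)|$: the two branches in \eqref{eq: condition L is root} meet transversally, so $\lambda_0$ is a simple root of $f_\infty$ and $f_\infty'(\lambda_0)\neq 0$. A standard stability-of-roots argument (mean value theorem applied to $f_m$ near $\lambda_0$, using uniform control of $f_m'$ on the subinterval) then gives
\[
|\lambda_{\mathsf{i}}^{(N)}-\lambda_0|\leq \frac{|f_m(\lambda_0)|}{\inf|f_m'|}\leq \tilde C\,|\rho|^{-2m}\leq \tilde A\,e^{-\tilde B N},
\]
since $m=(N-1)/4$ and $|\rho|$ is bounded away from $1$ on the subinterval.

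Finally I would transfer this to the frequencies: because $\omega_{\mathsf{i}}=v_b\sqrt{\delta\lambda}$ is smooth and has bounded derivative near the (strictly positive) value $\lambda_0$, the bound on $|\lambda_{\mathsf{i}}^{(N)}-\lambda_0|$ yields $|\omega_{\mathsf{i}}-\omega_{\mathsf{i}}^{(N)}|<Ae^{-BN}$ for suitable $A,B>0$ independent of $N$. The main obstacle I anticipate is making the convergence estimate genuinely uniform: I must verify that $|\rho|$ stays bounded away from $1$ and that $|f_m'|$ stays bounded away from $0$ on a fixed closed subinterval of $\Gamma$ around $\lambda_0$, uniformly in $m$ — this is where the simplicity of the root $\lambda_0$ (transversality of the two branches in \eqref{eq: condition L is root}) and the strict monotonicity from \Cref{lemma: monotonicity_Chebyshev} do the real work, ruling out degeneracy of $f_\infty'$ and ensuring the $\BO(\rho^{-2m})$ error does not blow up as one approaches $\partial\Gamma$.
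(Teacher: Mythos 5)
Your proposal is correct in outline and assembles the existence, uniqueness and limiting value exactly as the paper does (via \cref{prop:existencedefectfrequency}, \cref{prop: exaclty one in gap} and the root \eqref{eq: root of L}), but for the exponential rate you take a genuinely different route. The paper never estimates $f_m-f_\infty$: instead it pads the $\ell^2$-normalised interface eigenvector $v_{\mathsf{i},N}$ with zeros so that it becomes a test vector for the larger matrices $\mathcal{C}_{N+4k}$, observes in \eqref{eq: vgN0 is pseudoeigenvector} that the resulting residual is supported at the two ends of the chain and is controlled by the boundary entry $v_{\mathsf{i},N}^{(1)}$, invokes the localisation formulae of \cref{prop: exponential decay and sines} to make this residual exponentially small after normalisation, and then uses normality of $\mathcal{C}$ together with the pseudospectral characterisation of \cref{thm:pseudospectraofnormaloperator} to conclude $\vert\lambda_{\mathsf{i}}-\lambda_{\mathsf{i}}^{(N)}\vert\leq\epsilon_N$. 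Your route instead quantifies the convergence of the Chebyshev ratios, $U_{m-1}(y)/U_m(y)=\rho\inv+\BO(\rho^{-2m})$ with $\rho$ the dominant characteristic root, and runs a root-stability argument on $f_m$. Both are sound, and each buys something: yours is elementary, stays entirely within the polynomial framework of \Cref{sect5}, and produces an explicit rate $\vert\rho(\lambda_0)\vert^{-2m}$ tied to the depth of $\lambda_0$ in the gap; the paper's argument needs no non-degeneracy of $f_\infty$ at $\lambda_0$ --- precisely the point you flag as your main obstacle --- and its rate is inherited from the physical decay of the mode, so it extends to defects for which explicit characteristic polynomials are not available. Two small repairs to your write-up: (i) for $\lambda\in\Gamma$ one has $y<-1$, so the root of modulus greater than $1$ is $y-\sqrt{y^2-1}$, not $y+\sqrt{y^2-1}$ (harmless, since your ratio formula is symmetric under $\rho\leftrightarrow\rho\inv$); (ii) the simplicity of $\lambda_0$ is not actually essential: $f_\infty$ changes sign at $\lambda_0$ (this is how \cref{prop:existencedefectfrequency} is proved), so the root has odd multiplicity $p$, and bounding $\vert f_\infty(\lambda_{\mathsf{i}}^{(N)})\vert\leq C\vert\rho\vert^{-2m}$ directly gives $\vert\lambda_{\mathsf{i}}^{(N)}-\lambda_0\vert=\BO(\vert\rho\vert^{-2m/p})$, which is still exponential in $N$; this bypasses the mean-value-theorem step and the uniform lower bound on $\vert f_m'\vert$ altogether.
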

\begin{proof}
The limit $\lambda_i$ can be computed from (\ref{eq: root of L}). The only part left to prove is the convergence rate. Denote by $0_d\in\R^d$ the zero element of that vector space and by $v_{\mathsf{i},N}$ the $l^2$-normalised localised eigenvector associated to $\lambda_{\mathsf{i}}^{(N)}$. Then, we remark that
    \begin{align}
    \label{eq: vgN0 is pseudoeigenvector}
\left\Vert(\mathcal{C}_{N+4}-\lambda_{\mathsf{i}}^{(N)})\begin{pmatrix}
            0_2\\v_{\mathsf{i},N}\\0_2
        \end{pmatrix}\right\Vert_2 = \left\Vert\begin{pmatrix}
            0\\
            \beta_2 v_{\mathsf{i},N}^{(1)}\\
        -v_{\mathsf{i},N}^{(1)}\beta_2
            \\
            0_{N-6}\\
            -v_{\mathsf{i},N}^{(N)}\beta_2\\
            \beta_2 v_{\mathsf{i},N}^{(N)}\\
            0
        \end{pmatrix}\right\Vert_2 = 4\beta_2^2(v_{\mathsf{i},N}^{(1)})^2\eqqcolon \epsilon_{N}.
    \end{align}
    Remark also that the same estimation holds if we replace $\mathcal{C}_{N+4}$ in \eqref{eq: vgN0 is pseudoeigenvector} with $\mathcal{C}_{N+4k}$ for any $k\in\N$. This shows that $\lambda_{\mathsf{i}}^{(N)}$ is an $\epsilon$-pseudo-eigenvalue of $\mathcal{C}_{N+4k}$ for any $k\in\N$ for every $\epsilon>\epsilon_{N}$. Since $\mathcal{C}_N$ is normal, by Theorem \ref{thm:pseudospectraofnormaloperator}, we have $\vert \lambda_{\mathsf{i}} - \lambda_{\mathsf{i}}^{(N)}\vert\leq \epsilon_{N}$. On the other hand,  for $N$ big enough, $\lambda_{\mathsf{i}}^{(N)}$ is close to $\lambda_{\mathsf{i}}$ and the formulae of \cref{prop: exponential decay and sines} yield \begin{align}\label{eq: epsionN0 goes to 0}
        v_{\mathsf{i},N}^{(1)} = \mathcal{O}(1)\qquad \Vert v_{\mathsf{i},N} \Vert_2=\mathcal{O}(e^{C_1N})\qquad\text{as } {N\to\infty}
    \end{align}
    for some $C_1>0$. So if $v_{\mathsf{i},N}$ needs to be normalised $v_{\mathsf{i},N}^{(1)}$ decays exponentially. By (\ref{eq: vgN0 is pseudoeigenvector}), $\epsilon_N$ decays exponentially, which together with \cref{prop: reduction to capacitance matrix} proves \eqref{eq: error estimate convergence frequency in gap}.
 \end{proof}

We remark that, combined with Proposition \ref{prop: exponential decay and sines}, Theorem \ref{thm:existenceofeigenfrquency} also gives the decaying rate of the interface mode for a structure with sufficiently many resonators. 

\begin{figure}[h]
    \centering
    \includegraphics[width=0.45\textwidth]{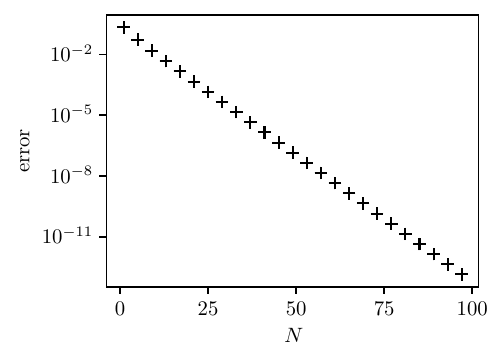}
    \caption{Convergence of the eigenvalue in the gap ($y$-axis in \emph{log} scale). We display the left-hand side of \eqref{eq: error estimate convergence frequency in gap} for a structure with $s_1=1$ and $s_2=2$.
    }
    \label{fig: convergence of error}
\end{figure}

\section{Stability analysis} \label{sect6}
Interface modes of SSH-like structures are well known to be stable, \emph{i.e.}, perturbations of the system affect them only slightly. In this section we show that perturbation in the geometry have limited effect on both the resonant frequencies and the associated modes and we quantify this effect. 

To this end we consider a system of $N=4m+1$ resonators as shown in \cref{fig: geometrical defect} but the spacings $s_i$ are now perturbed:
\begin{align}
    \label{eq: perturbation si-s}
    s_i = \begin{dcases}
        s_1 + \tilde\epsilon_i,& 1\leq i \leq 2m, \text{ $i$ odd},\\
        s_2 + \tilde\epsilon_i,& 1\leq i \leq 2m, \text{ $i$ even},\\
        s_1 + \tilde\epsilon_i,& 2m+1\leq i \leq 4m, \text{ $i$ even},\\
        s_2 + \tilde\epsilon_i,& 2m+1\leq i \leq 4m, \text{ $i$ odd}.
    \end{dcases}
\end{align}
Furthermore, we denote 
\begin{align}
    \label{eq: eps_i}
    \epsilon_i = \begin{dcases}
        -\frac{\tilde\epsilon_i}{s_1(s_1+\tilde\epsilon_i)},& 1\leq i \leq 2m, \text{ $i$ odd},\\
        -\frac{\tilde\epsilon_i}{s_2(s_2+\tilde\epsilon_i)},& 1\leq i \leq 2m, \text{ $i$ even},\\
        -\frac{\tilde\epsilon_i}{s_1(s_1+\tilde\epsilon_i)},& 2m+1\leq i \leq 4m, \text{ $i$ even},\\
        -\frac{\tilde\epsilon_i}{s_2(s_2+\tilde\epsilon_i)},& 2m+1\leq i \leq 4m, \text{ $i$ odd}.
    \end{dcases}
\end{align}
The following proposition handles stability of the eigenvalues and is a direct application of the well-known Weyl theorem.
\begin{proposition}\label{thm:eigenvaluestability1}
Let $\hat{\mathcal{C}}$ be the capacitance matrix associated to the structure described in \eqref{eq: perturbation si-s} and let 
\begin{align}\label{equ:errorbound1}
    \epsilon \coloneqq \max_{1\leq i\leq N-2}\vert \epsilon_i\vert+\vert\epsilon_{i+1}\vert.
\end{align}
Then, the eigenvalues $\hat{\lambda}_k$ (sorted increasingly) satisfy
\begin{align}
\label{eq: bound eigenvalues error}
    \vert \hat{\lambda}_k - \lambda_k\vert \leq 2\epsilon, \quad 1\leq k\leq N,
\end{align}
where $\lambda_k$ are the eigenvalues of $\mathcal{C}$.
\end{proposition}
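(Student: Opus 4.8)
The plan is to bound the spectral distance between the perturbed capacitance matrix $\hat{\mathcal{C}}$ and the unperturbed matrix $\mathcal{C}$ by controlling the operator norm of their difference. Both matrices are real symmetric (hence normal), so Weyl's inequality applies directly: for Hermitian matrices $M$ and $M+E$ with eigenvalues sorted increasingly, one has $|\lambda_k(M+E)-\lambda_k(M)|\leq \|E\|_2$ for every $k$. Thus the entire content of the proof reduces to writing $\hat{\mathcal{C}}=\mathcal{C}+E$, identifying the perturbation matrix $E$ explicitly, and estimating $\|E\|_2$ in terms of the quantities $\epsilon_i$ defined in \eqref{eq: eps_i}.

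First I would compute the entries of $E=\hat{\mathcal{C}}-\mathcal{C}$. Since each spacing $s_i$ is perturbed to $s_i+\tilde\epsilon_i$, the reciprocal $1/s_i$ shifts by $1/(s_i+\tilde\epsilon_i)-1/s_i=-\tilde\epsilon_i/(s_i(s_i+\tilde\epsilon_i))=\epsilon_i$, which is precisely why the $\epsilon_i$ were defined the way they are. Because the capacitance matrix in \eqref{eq: general capacitance matrix} is assembled from the entries $\pm 1/s_i$ and $1/s_i+1/s_{i+1}$, the difference $E$ is again a symmetric tridiagonal matrix whose off-diagonal entry in position $(i,i+1)$ is $-\epsilon_i$ and whose diagonal entry in position $(i+1,i+1)$ collects the two adjacent contributions $\epsilon_i+\epsilon_{i+1}$ (with suitable boundary modifications at the first and last rows). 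Writing this out confirms that $E$ has the same tridiagonal structure as $\mathcal{C}$ but with the perturbations $\epsilon_i$ in place of $1/s_i$.

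Next I would estimate $\|E\|_2$. Since $E$ is symmetric, $\|E\|_2$ equals its spectral radius, which I would bound using the Gershgorin circle theorem (or equivalently the bound $\|E\|_2\leq\|E\|_\infty$ for symmetric matrices). The $k$-th Gershgorin disc is centred at the diagonal entry $\epsilon_{k-1}+\epsilon_k$ and has radius $|\epsilon_{k-1}|+|\epsilon_k|$, so every eigenvalue of $E$ lies within distance $|\epsilon_{k-1}+\epsilon_k|+|\epsilon_{k-1}|+|\epsilon_k|\leq 2(|\epsilon_{k-1}|+|\epsilon_k|)$ of the origin; taking the maximum over $k$ and invoking the definition \eqref{equ:errorbound1} of $\epsilon$ yields $\|E\|_2\leq 2\epsilon$. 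Combining this with Weyl's inequality gives the claimed bound \eqref{eq: bound eigenvalues error}.

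The argument is essentially routine, so there is no genuine conceptual obstacle; the only point requiring care is the bookkeeping at the two corner rows, where the matrices $\mathcal{C}$ and $\hat{\mathcal{C}}$ have the special diagonal entries $\tilde\alpha=1/s_1$ and the defect entry $\eta=2/s_2$ rather than the generic $\alpha$. I would verify that at these boundary rows the perturbation entry is simply $\epsilon_1$ (respectively $\epsilon_{N-1}$) or $2\epsilon_{2m}$ at the interface, so that the Gershgorin estimate still produces a row sum bounded by $2\epsilon$. Once the corner cases are checked, the factor of $2$ in \eqref{eq: bound eigenvalues error} is seen to be exactly the worst-case Gershgorin row sum, and the proof is complete.
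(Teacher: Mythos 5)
Your proof is correct and takes essentially the same route as the paper: Weyl's inequality combined with a Gershgorin bound on the tridiagonal difference $E=\hat{\mathcal{C}}-\mathcal{C}$, whose zero row sums (the paper phrases this as $\bm 1\in\ker E$, so that each diagonal entry equals minus the sum of its two off-diagonal neighbours) produce exactly the factor $2\epsilon$. One minor bookkeeping slip that does not affect the argument: at the interface row the diagonal entry of $E$ is $\epsilon_{2m}+\epsilon_{2m+1}$ rather than $2\epsilon_{2m}$, since the two adjacent perturbations are independent, but the Gershgorin row sum is still at most $2\left(\vert\epsilon_{2m}\vert+\vert\epsilon_{2m+1}\vert\right)\leq 2\epsilon$, so the conclusion stands.
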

\begin{proof}
    Weyl's theorem states the same result but with the bound of \eqref{eq: bound eigenvalues error} replaced by $\Vert \hat{\mathcal{C}} - \mathcal{C}\Vert$. However, for a tridiagonal matrix $M$ with $\bm 1$ in its kernel, $\Vert M\Vert\leq 2\max_{i}\vert M_{i(i-1)}\vert+\vert M_{i(i+1)}\vert$ by the Gershgorin circle theorem so we obtain the result. 
\end{proof}

Applying \cref{thm:eigenvaluestability1} to our system of dimers where the perturbations $\vert\tilde\epsilon_i\vert\leq \eta$ are in the interval $(-\eta,\eta)$ for some $\eta>0$, we obtain that the eigenvalue perturbation is bounded by 
\begin{align*}
    \frac{2\eta}{s_1(s_1-\eta)}+\frac{2\eta}{s_2(s_2-\eta)}=2\eta\left(\frac{1}{s_1^2}+\frac{1}{s_2^2}\right) + \mathcal{O}(\eta^2)\quad \text{as }\eta\to 0.
\end{align*}

In order to analyse the stability of the eigenvectors, we will use the 
Davis--Kahan theorem \cite{davis-kahan}, which needs some preliminary introduction. Let $E$ and $F$ be $d\times r$ matrices with orthonormal columns such that $\text{span}(E)=\mathcal{E}$ and $\text{span}(F)=\mathcal{F}$. The \emph{canonical angles} between $\mathcal{E}$ and $\mathcal{F}$ are defined as 
\begin{align*}
    \theta_j = \cos\inv \sigma_j \quad 1\leq j\leq r,
\end{align*}
where $\sigma_j$ are the $r$ singular values of $E^TF$. We denote by
\begin{align*}
    \Theta(\mathcal{E}, \mathcal{F}) = \diag(\theta_1,\dots,\theta_r)
\end{align*}
the canonical angles' matrix.

The Davis--Kahan theorem states the following.
\begin{theorem}\label{thm:daviskahanthm} 
Let $S$ and $\tilde S$ be two $d\times d$ symmetric matrices with eigenvalues
\begin{align*}
    \lambda_1\geq \lambda_2\geq\dots\geq \lambda_d,\\
    \tilde\lambda_1\geq \tilde\lambda_2\geq\dots\geq \tilde\lambda_d,
\end{align*}
respectively. Fix $1\leq r\leq s\leq d$ and let $V$ and $\tilde V$ be the matrices having as columns the eigenvectors corresponding to $\lambda_j$ and $\tilde \lambda_j$ for $r\leq j\leq s$. Let $\text{span}(V)=\mathcal{V}$ and $\text{span}(\tilde V)=\tilde{\mathcal{V}}$. Define 
\begin{align*}
    \updelta \coloneqq \inf \{ \vert \lambda - \tilde\lambda\vert : \lambda\in[\lambda_s,\lambda_r], \tilde\lambda\in(-\infty,\tilde\lambda_{s+1}]\cup[\tilde\lambda_{r-1},\infty)\}.
\end{align*}
If $\updelta>0$, then
\begin{align*}
    \Vert \sin\Theta(\mathcal{V}, \tilde{\mathcal{V}})\Vert_2\leq \frac{\Vert S-\tilde S\Vert_2}{\updelta},
\end{align*}
where $\sin\Theta(\mathcal{V}, \tilde{\mathcal{V}})_{ii}=\sin(\Theta(\mathcal{V}, \tilde{\mathcal{V}})_{ii})$. 
\end{theorem}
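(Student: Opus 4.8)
The plan is to reduce the statement to a Sylvester-equation estimate, which is the classical route to the $\sin\Theta$ theorem. First I would fix orthonormal bases adapted to the two subspaces. Let the columns of $V\in\R^{d\times(s-r+1)}$ be the orthonormal eigenvectors of $S$ with eigenvalues $\lambda_r,\dots,\lambda_s$, so that $SV=V\Lambda_1$ with $\Lambda_1=\diag(\lambda_r,\dots,\lambda_s)$, and let the columns of $W\in\R^{d\times(d-s+r-1)}$ be the orthonormal eigenvectors of $\tilde S$ spanning $\tilde{\mathcal{V}}^\perp$, that is, those with eigenvalues $\tilde\lambda_j$ for $j<r$ or $j>s$, so that $\tilde S W=W\Lambda_2$ with $\Lambda_2$ diagonal collecting these eigenvalues. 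The bookkeeping fact I would record at the outset is that $\Vert\sin\Theta(\mathcal{V},\tilde{\mathcal{V}})\Vert_2=\Vert W^\top V\Vert_2$: the singular values of $W^\top V$ are exactly the sines of the canonical angles between $\mathcal{V}$ and $\tilde{\mathcal{V}}$, a standard consequence of the CS decomposition.

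Next I would derive the Sylvester identity. Set $M\coloneqq W^\top V$ and consider the residual $R\coloneqq(S-\tilde S)V$, which satisfies $\Vert R\Vert_2\leq\Vert S-\tilde S\Vert_2$ since $V$ has orthonormal columns. Multiplying on the left by $W^\top$ and using $SV=V\Lambda_1$ together with $W^\top\tilde S=\Lambda_2 W^\top$ (symmetry of $\tilde S$ and $\Lambda_2$), I obtain
\[
W^\top R = W^\top S V - W^\top \tilde S V = M\Lambda_1 - \Lambda_2 M.
\]
Thus the quantity I want to bound, $\Vert M\Vert_2$, is tied through $\Vert M\Lambda_1-\Lambda_2 M\Vert_2=\Vert W^\top R\Vert_2\leq\Vert R\Vert_2$ to the perturbation size.

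The heart of the argument is the lower bound $\Vert M\Lambda_1-\Lambda_2 M\Vert_2\geq\updelta\Vert M\Vert_2$. Because $\Lambda_1$ and $\Lambda_2$ are diagonal, the $(j,i)$ entry of $M\Lambda_1-\Lambda_2 M$ equals $M_{ji}(\Lambda_{1,ii}-\Lambda_{2,jj})$, and every such difference has modulus at least $\updelta$ by the definition of the separation; this yields the bound immediately in the Frobenius norm, and the operator-norm version follows from the standard estimate for the Sylvester operator $X\mapsto X\Lambda_1-\Lambda_2 X$ when the spectra of $\Lambda_1$ and $\Lambda_2$ are separated by $\updelta$. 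Chaining the three facts gives
\[
\updelta\Vert\sin\Theta(\mathcal{V},\tilde{\mathcal{V}})\Vert_2 = \updelta\Vert M\Vert_2 \leq \Vert M\Lambda_1-\Lambda_2 M\Vert_2 = \Vert W^\top R\Vert_2 \leq \Vert R\Vert_2 \leq \Vert S-\tilde S\Vert_2,
\]
which is the claim.

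I expect the operator-norm Sylvester lower bound to be the only genuinely delicate point. The diagonal structure makes the Frobenius version transparent, but transferring it to $\Vert\cdot\Vert_2$ (or, more robustly, establishing it for every unitarily invariant norm) uses the separation hypothesis $\updelta>0$ in an essential way and is where I would spend the most care; everything else is orthogonal-basis bookkeeping and the canonical-angle identification $\Vert\sin\Theta\Vert_2=\Vert W^\top V\Vert_2$.
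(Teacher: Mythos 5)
You should first note that the paper contains no proof of this statement: it is quoted verbatim as the Davis--Kahan $\sin\Theta$ theorem, with a citation to the original reference, and is then used as a black box to deduce \cref{thm: stability interface modes}. So there is no internal proof to compare against; your proposal has to stand on its own, and it essentially does --- it is the classical proof of this theorem. The three building blocks are all correctly identified: the CS-decomposition identity $\Vert \sin\Theta(\mathcal{V},\tilde{\mathcal{V}})\Vert_2 = \Vert W^{\top}V\Vert_2$ with $W$ an orthonormal basis of $\tilde{\mathcal{V}}^{\perp}$, the Sylvester identity $W^{\top}(S-\tilde S)V = M\Lambda_1 - \Lambda_2 M$ (which uses only $SV=V\Lambda_1$, $\tilde SW = W\Lambda_2$ and symmetry), and a spectral-separation lower bound for $X\mapsto X\Lambda_1-\Lambda_2X$. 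The dimension bookkeeping and the Frobenius-norm version of the lower bound are also correct.

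The one point where your write-up is imprecise is exactly the step you flag as delicate, and the imprecision matters. The claim that the Sylvester operator admits the lower bound $\updelta\Vert X\Vert_2$ ``when the spectra of $\Lambda_1$ and $\Lambda_2$ are separated by $\updelta$'' is not valid for the spectral norm in that generality: if the two spectra are merely at distance $\updelta$ but interleaved on the real line, the constant degrades --- the general Hermitian result of Bhatia, Davis and McIntosh carries an extra factor of $\pi/2$, and the constant-$\updelta$ inequality is simply unavailable. What rescues the argument is that the $\updelta$ defined in the theorem encodes an \emph{interval-type} separation: $\sigma(\Lambda_1)\subset[\lambda_s,\lambda_r]$, while by construction every eigenvalue of $\Lambda_2$ lies in $(-\infty,\tilde\lambda_{s+1}]\cup[\tilde\lambda_{r-1},\infty)$ and hence at distance at least $\updelta$ from the whole interval $[\lambda_s,\lambda_r]$. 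In this configuration the constant-$\updelta$ bound does hold for $\Vert\cdot\Vert_2$, by a two-line shift-and-invert argument that you should include to close the gap: set $c=(\lambda_r+\lambda_s)/2$ and $\rho=(\lambda_r-\lambda_s)/2$, so that $\Vert\Lambda_1-cI\Vert_2\leq\rho$ and $\Lambda_2-cI$ is invertible with $\Vert(\Lambda_2-cI)^{-1}\Vert_2\leq(\rho+\updelta)^{-1}$. Since the Sylvester identity is unchanged by the simultaneous shift, $M=(\Lambda_2-cI)^{-1}\bigl(M(\Lambda_1-cI)-W^{\top}R\bigr)$, whence
\begin{equation*}
\Vert M\Vert_2 \;\leq\; \frac{\rho\,\Vert M\Vert_2+\Vert W^{\top}R\Vert_2}{\rho+\updelta},
\qquad\text{i.e.}\qquad
\updelta\,\Vert M\Vert_2 \;\leq\; \Vert W^{\top}R\Vert_2 \;\leq\; \Vert S-\tilde S\Vert_2 .
\end{equation*}
With this lemma spelled out (rather than an appeal to a ``standard estimate'' that is false without the interval hypothesis), your proof is complete.
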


As a direct consequence of Theorem \ref{thm:daviskahanthm}, we have the following theorem for the stability of the interface eigenmodes. 
\begin{theorem}\label{thm: stability interface modes}
Let $\epsilon< \frac{1}{2}\left(\frac{1}{s_1}-\frac{1}{s_2}\right)$ in (\ref{equ:errorbound1}). Let $\bm v$ and $\hat{\bm v}$ be the eigenvectors corresponding to the eigenvalues $\lambda_i$ and $\hat{\lambda}_i$ in the gap of $\mathcal{C}$ and $\hat{\mathcal{C}}$, respectively. Then
\begin{align}
\label{eq: stabiltiy v gap}
    \Vert \bm v - \hat{\bm v}\Vert_2 &\leq \frac{2\sqrt{2}\epsilon}{\updelta}\\&\leq \frac{2\sqrt{2}\epsilon}{\updelta_0-2\epsilon},\label{eq: stabiltiy v gap apriori}
\end{align}
where $\updelta\coloneqq\min\{\vert\lambda_i-\hat{\lambda}_{i+1}\vert,\vert\lambda_i-\hat{\lambda}_{i-1}\vert\}$ and $\updelta_0 =\min\{\vert\lambda_i-\lambda_{i+1}\vert,\vert\lambda_i-\lambda_{i-1}\vert\}$. The \emph{a priori} estimate \eqref{eq: stabiltiy v gap apriori} holds for $\updelta_0>2\epsilon$.%
\end{theorem}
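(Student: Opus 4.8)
The plan is to apply the Davis--Kahan theorem (\cref{thm:daviskahanthm}) directly to the pair of symmetric matrices $S = \mathcal{C}$ and $\tilde S = \hat{\mathcal{C}}$, taking the one-dimensional invariant subspaces spanned by the single interface eigenvectors $\bm v$ and $\hat{\bm v}$. Since $\lambda_i$ and $\hat\lambda_i$ are isolated eigenvalues in the gap $\Gamma$ (their existence and uniqueness being guaranteed by \cref{prop:existencedefectfrequency} and \cref{prop: exaclty one in gap}), I would set $r = s = i$ in \cref{thm:daviskahanthm}, so that $\mathcal V = \sspan\{\bm v\}$ and $\tilde{\mathcal V} = \sspan\{\hat{\bm v}\}$ are one-dimensional and $\Theta(\mathcal V, \tilde{\mathcal V})$ is a single canonical angle $\theta$.

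First I would identify the separation quantity $\updelta$ from the theorem in this one-dimensional setting: with $r=s=i$ the relevant neighbouring eigenvalues of $\hat{\mathcal C}$ are $\hat\lambda_{i+1}$ and $\hat\lambda_{i-1}$, so that $\updelta = \inf\{|\lambda_i - \tilde\lambda| : \tilde\lambda \in (-\infty,\hat\lambda_{i+1}] \cup [\hat\lambda_{i-1},\infty)\} = \min\{|\lambda_i - \hat\lambda_{i+1}|, |\lambda_i - \hat\lambda_{i-1}|\}$, which is exactly the $\updelta$ defined in the statement. The theorem then gives $\|\sin\Theta(\mathcal V,\tilde{\mathcal V})\|_2 = |\sin\theta| \leq \|\mathcal C - \hat{\mathcal C}\|_2 / \updelta$. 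For the numerator I would reuse the Gershgorin estimate from the proof of \cref{thm:eigenvaluestability1}, namely $\|\mathcal C - \hat{\mathcal C}\|_2 \leq 2\epsilon$ with $\epsilon$ as in \eqref{equ:errorbound1}, since $\mathcal C - \hat{\mathcal C}$ is again a tridiagonal matrix annihilating $\bm 1$.

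Next I would convert the angle bound into the Euclidean norm bound on the difference of the (unit-norm) eigenvectors. For two unit vectors at angle $\theta$, choosing signs appropriately, $\|\bm v - \hat{\bm v}\|_2^2 = 2(1-\cos\theta) = 2 \cdot \frac{\sin^2\theta}{1+\cos\theta} \leq 2\sin^2\theta$ when $\cos\theta \geq 0$, giving $\|\bm v - \hat{\bm v}\|_2 \leq \sqrt{2}\,|\sin\theta|$. Combining with the Davis--Kahan bound yields $\|\bm v - \hat{\bm v}\|_2 \leq \sqrt{2}\cdot \frac{2\epsilon}{\updelta} = \frac{2\sqrt 2\,\epsilon}{\updelta}$, which is precisely \eqref{eq: stabiltiy v gap}.

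Finally, for the \emph{a priori} estimate \eqref{eq: stabiltiy v gap apriori} I would bound $\updelta$ from below by the unperturbed separation $\updelta_0 = \min\{|\lambda_i-\lambda_{i+1}|, |\lambda_i-\lambda_{i-1}|\}$ minus the eigenvalue movement: by \cref{thm:eigenvaluestability1} each $|\hat\lambda_k - \lambda_k| \leq 2\epsilon$, so by the triangle inequality $|\lambda_i - \hat\lambda_{i\pm1}| \geq |\lambda_i - \lambda_{i\pm1}| - |\hat\lambda_{i\pm1} - \lambda_{i\pm1}| \geq \updelta_0 - 2\epsilon$, hence $\updelta \geq \updelta_0 - 2\epsilon > 0$ under the hypothesis $\updelta_0 > 2\epsilon$. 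Substituting gives \eqref{eq: stabiltiy v gap apriori}. The main obstacle I anticipate is twofold: first, verifying that the hypothesis $\epsilon < \frac{1}{2}(1/s_1 - 1/s_2)$ indeed guarantees the interface eigenvalue stays isolated in the gap with $\updelta > 0$ (using that the gap has width $2/s_1 - 2/s_2$ and the eigenvalue cannot escape it under a perturbation of size $2\epsilon$), so that Davis--Kahan applies; and second, the sign-selection subtlety in the eigenvector difference bound, where one must choose the representative $\hat{\bm v}$ (eigenvectors being defined only up to sign) so that $\cos\theta \geq 0$ for the inequality $\|\bm v - \hat{\bm v}\|_2 \leq \sqrt 2\,|\sin\theta|$ to hold.
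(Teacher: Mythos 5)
Your proposal is correct and follows essentially the same route as the paper's proof: applying the Davis--Kahan theorem with $r=s$ to the one-dimensional spans of the interface eigenvectors, bounding $\Vert \mathcal{C}-\hat{\mathcal{C}}\Vert_2\leq 2\epsilon$ via the Gershgorin argument from \cref{thm:eigenvaluestability1}, and converting the angle bound with $\Vert \bm v - \hat{\bm v}\Vert_2\leq\sqrt{2}\,\sin\Theta(\bm v,\hat{\bm v})$. The details you supply (the identification of $\updelta$, the sign-selection in the $\sqrt{2}\sin\theta$ bound, and the triangle-inequality derivation of $\updelta\geq\updelta_0-2\epsilon$ from the Weyl-type estimate) are exactly the steps the paper leaves implicit.
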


Remark that for $s_1=1$ and $s_2=2$ and $m$ large we have $\updelta_0\approx0.219$ so that the \emph{a priori} estimate \eqref{eq: stabiltiy v gap apriori} holds for $\epsilon<0.1$. This \emph{a priori} estimate is, however, suboptimal as \cref{fig: stability eve} shows.
\begin{proof}[Proof of \cref{thm: stability interface modes}] The proof immediately follows from \cref{thm:daviskahanthm} with $r=s$ by using the bound $\Vert \bm v - \hat{\bm v}\Vert_2\leq\sqrt{2}\sin\Theta(\bm v,\hat{\bm v})$ \cite{DK_Stats}. 
\end{proof}
In \cref{fig: stability} we show numerically the high stability of the interface modes. We consider perturbations of the type $\tilde\epsilon_i\sim U(-\eta,\eta)$ where we call $\eta$ the perturbation size, we display the latter as percentage relative to the resonator's size. \cref{fig: stability eva} shows that the interface eigenfrequency (lying in the gap) is only minimally perturbed even by perturbation in the size of $20\%$. We remark that numerically the bound of \eqref{eq: bound eigenvalues error} can even be sharpened to $\vert \hat{\lambda_k} - \lambda_k\vert \leq \frac{3}{2}\epsilon$. \cref{fig: stability eve} shows $\left\Vert\bm v - \hat{\bm v}\right\Vert_2$ for various perturbation sizes and normalised $\bm v$ and $\hat{\bm v}$. The black lines shows the average over $10^4$ runs while the gray area encloses the range from the minimum to the maximum value of $\left\Vert\bm v - \hat{\bm v}\right\Vert_2$.
\begin{figure}[h]
    \centering
    \begin{subfigure}[t]{0.48\textwidth}
    \centering
    \includegraphics[height=0.76\textwidth]{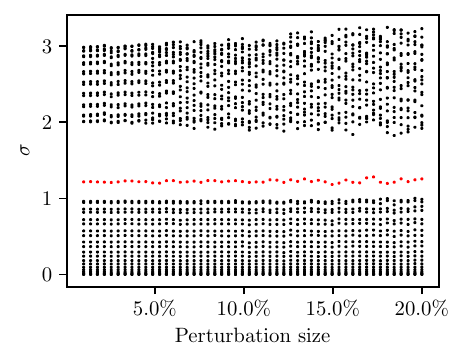}
    \caption{Spectrum of the capacitance matrix with perturbations given by $\tilde\epsilon_i\sim U(-\eta,\eta)$. For every perturbation size the spectrum of one realisation is shown. The eigenvalue in red corresponds to the localised interface mode.}
    \label{fig: stability eva}
    \end{subfigure}\hfill
    \begin{subfigure}[t]{0.48\textwidth}
    \centering
    \includegraphics[height=0.76\textwidth]{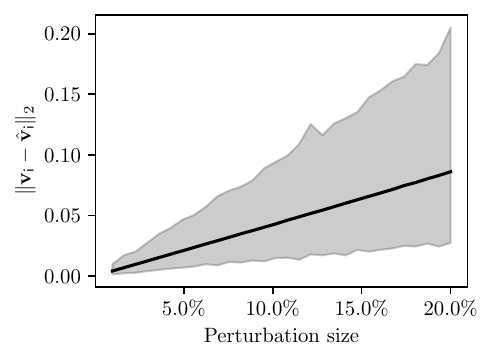}
    \caption{Stability of the interface mode. The solid black line shows the average dislocation over $10^4$ runs, while the gray area encloses the range from the minimum to the maximum dislocation observed over these realisations.}
    \label{fig: stability eve}
    \end{subfigure}
    
    \caption{The interface eigenvalue and the corresponding interface mode are very stable also in presence of big perturbations. Simulations in a system of $N=41$ resonators with $s_1=1$ and $s_2=2$. Perturbations are uniformly distributed in $(-\eta,\eta)$ where we call $\eta$ the perturbation size, expressed relatively to the resonators' sizes.}
    \label{fig: stability}
\end{figure}
\section{Concluding remarks}  \label{sect7}
In this paper, we have quantitatively characterised interface eigenmodes in finite, dimer systems of subwavelength resonators with a geometric defect and proved the exponential decay of their associated eigenmodes. Our characterisation is based on (broken) translation invariance properties of the associated capacitance matrix together with  properties of Chebyshev polynomials. The  $3$-term recurrence relation satisfied by the  Chebyshev polynomials is shown to be useful for analysing spectra of tridiagonal (perturbed) $2$-Toeplitz matrices. 

Following this line of research, it would be very interesting to generalise the results obtained in this paper to extended (known also as multi-band) SSH models, exhibiting  exponentially localised interface eigenmodes with corresponding eigenfrequencies inside the multiple subwavelength band gaps of the structure \cite{extendedSSH1,extendedSSH2}. Another highly interesting direction would be to extend the current results to finite dimer systems of three-dimensional subwavelength resonators. In the case of three-dimensional systems of subwavelength resonators, the main difficulty  occurs from the long-range interactions between the resonators, which lead to a slow decay of the off-diagonal terms of the corresponding capacitance matrix. Nevertheless, in view of the recent results on  $k$-banded approximations of three-dimensional capacitance matrices \cite{skin3d} together with the convergence spectral results as the size of the structure goes to infinity in \cite{ammari2023perturbed,essentiel2023},  it may be possible to prove existence and uniqueness of localised eigenmodes in finite chains of subwavelength resonators in three dimensions, as numerically shown in \cite{ssh3d}.  
 Another very interesting problem is to relate the localisation effect at the interface to the statistics of the eigenvalues of the capacitance matrix under random perturbations in the parameters of the system. This would extend the well-known Thouless localisation criterion \cite{thouless} to interface eigenmodes. 

\vspace{1cm}
\noindent
\textbf{Data Availability}

The data that support the findings of this study are openly available at \url{https://zenodo.org/doi/10.5281/zenodo.10361315}.

\vspace{0.5cm}
\noindent
\textbf{Conflict of interest} 

The authors have no competing interests to declare that are relevant to the content of this
article.

\vspace{0.5cm}
\noindent
\textbf{Acknowledgments}

This work was supported by Swiss National Science Foundation grant number 200021--200307 and by the Engineering and Physical Sciences
Research Council (EPSRC) under grant number EP/X027422/1.

\appendix

\section{Pseudo-spectrum of a normal matrix} \label{appA}
\begin{definition}
$\sigma_{\varepsilon}(\mathbf{A})$ is the set of $z \in \mathbb{C}$ such that
$$
\|(z-\mathbf{A}) \mathbf{v}\|<\varepsilon,
$$
for some $\mathbf{v} \in \mathbb{C}^N$ with $\|\mathbf{v}\|=1$.
\end{definition}

The next theorem expresses these facts with the aid of the following notation for an open $\varepsilon$-ball:
$$
\Delta_{\varepsilon}=\{z \in \mathbb{C}:|z|<\varepsilon\} .
$$
In this theorem, a sum of sets has the usual meaning:
$$
\sigma(\mathbf{A})+\Delta_{\varepsilon}=\left\{z: z=z_1+z_2, z_1 \in \sigma(\mathbf{A}), z_2 \in \Delta_{\varepsilon}\right\},
$$
which is equal to $\{z: \operatorname{dist}(z, \sigma(\mathbf{A}))<\varepsilon\}$.

\begin{theorem}\label{thm:pseudospectraofnormaloperator}[Pseudo-spectrum of a normal matrix] 
For any $\mathbf{A} \in \mathbb{C}^{N \times N}$,
$$
\sigma_{\varepsilon}(\mathbf{A}) \supseteq \sigma(\mathbf{A})+\Delta_{\varepsilon} \quad \forall \varepsilon>0,
$$
and if $\mathbf{A}$ is normal and $\|\cdot\|=\|\cdot\|_2$, then
\begin{equation}\label{equ:pseudospectraequ1}
\sigma_{\varepsilon}(\mathbf{A})=\sigma(\mathbf{A})+\Delta_{\varepsilon} \quad \forall \varepsilon>0 .
\end{equation}
Conversely, if $\|\cdot\|=\|\cdot\|_2$, then (\ref{equ:pseudospectraequ1}) implies that $\mathbf{A}$ is normal.
\end{theorem}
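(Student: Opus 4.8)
The plan is to dispatch the three assertions in order of difficulty. The universal inclusion $\sigma_\varepsilon(\mathbf{A})\supseteq\sigma(\mathbf{A})+\Delta_\varepsilon$ holds for every norm, and I would read it straight off the definition: if $z=\lambda+\zeta$ with $\lambda\in\sigma(\mathbf{A})$ and $|\zeta|<\varepsilon$, and $\mathbf{v}$ is a unit eigenvector for $\lambda$, then $(z-\mathbf{A})\mathbf{v}=\zeta\mathbf{v}$, whence $\|(z-\mathbf{A})\mathbf{v}\|=|\zeta|<\varepsilon$ and $z\in\sigma_\varepsilon(\mathbf{A})$.

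For the equality in the normal, $\|\cdot\|=\|\cdot\|_2$ case only the reverse inclusion is new. Here I would diagonalise $\mathbf{A}=U\Lambda U^{*}$ unitarily and exploit unitary invariance of the $2$-norm: with $\mathbf{w}=U^{*}\mathbf{v}$ again a unit vector, $\|(z-\mathbf{A})\mathbf{v}\|_2^2=\|(zI-\Lambda)\mathbf{w}\|_2^2=\sum_j|z-\lambda_j|^2|w_j|^2\geq\min_j|z-\lambda_j|^2=\operatorname{dist}(z,\sigma(\mathbf{A}))^2$. Hence membership in $\sigma_\varepsilon(\mathbf{A})$ forces $\operatorname{dist}(z,\sigma(\mathbf{A}))<\varepsilon$, which gives $\sigma_\varepsilon(\mathbf{A})\subseteq\sigma(\mathbf{A})+\Delta_\varepsilon$.

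The converse is the substantive part. First I would restate the hypothesis analytically: since $z\in\sigma_\varepsilon(\mathbf{A})$ exactly when the least singular value $s_{\min}(zI-\mathbf{A})<\varepsilon$, and $z\in\sigma(\mathbf{A})+\Delta_\varepsilon$ exactly when $\operatorname{dist}(z,\sigma(\mathbf{A}))<\varepsilon$, equality of these two families of sublevel sets for all $\varepsilon$ is equivalent to the pointwise identity $\|(zI-\mathbf{A})^{-1}\|_2=1/\operatorname{dist}(z,\sigma(\mathbf{A}))$ for all $z\notin\sigma(\mathbf{A})$. I would then extract normality from the resolvent near each eigenvalue. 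If some $\lambda_j$ carried a nontrivial Jordan block, the resolvent would have a pole of order $\geq 2$ at $\lambda_j$, so $\|(zI-\mathbf{A})^{-1}\|_2$ would blow up faster than the order-one growth of $1/\operatorname{dist}(z,\sigma(\mathbf{A}))$; hence $\mathbf{A}$ must be diagonalisable, with resolvent $\sum_k P_k/(z-\lambda_k)$ for spectral projections $P_k$. Letting $z\to\lambda_j$ then yields $|z-\lambda_j|\,\|(zI-\mathbf{A})^{-1}\|_2\to\|P_j\|_2$ against $|z-\lambda_j|/\operatorname{dist}(z,\sigma(\mathbf{A}))\to 1$, forcing $\|P_j\|_2=1$ for every $j$. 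Since a nonzero idempotent of $2$-norm one is an orthogonal projection, each $P_j$ is self-adjoint and $\mathbf{A}=\sum_j\lambda_j P_j$ is normal.

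The main obstacle is this converse, and specifically the two asymptotic reductions: justifying that a Jordan block of size $\geq 2$ genuinely produces a higher-order resolvent pole (cleanest via the Laurent expansion of $(zI-\mathbf{A})^{-1}$ about $\lambda_j$, or via a Jordan/Schur normal form), and controlling the limit $|z-\lambda_j|\,\|(zI-\mathbf{A})^{-1}\|_2\to\|P_j\|_2$, where one must check that the regular part of the resolvent does not interfere as $z\to\lambda_j$. I would flag that the tempting shortcut through the numerical range, matching only the leading large-$z$ asymptotics of both sides to deduce that the numerical range equals the convex hull of $\sigma(\mathbf{A})$, is insufficient, since that identity does not by itself imply normality; it is the use of the \emph{full} pointwise resolvent identity, as above, that makes the argument close.
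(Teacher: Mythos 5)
Your proof is correct, and it is worth noting that the paper itself contains no proof of this statement: the theorem is recalled in \Cref{appA} as a standard fact from the pseudospectra literature (it is essentially the classical result of Trefethen and Embree on pseudospectra of normal matrices), so there is no in-paper argument to compare against. Your two forward assertions are handled exactly as one would expect: the inclusion $\sigma_{\varepsilon}(\mathbf{A}) \supseteq \sigma(\mathbf{A})+\Delta_{\varepsilon}$ from an eigenvector, and the reverse inclusion for normal $\mathbf{A}$ from unitary diagonalisation plus $\sum_j |z-\lambda_j|^2|w_j|^2 \geq \min_j|z-\lambda_j|^2$. The converse, which you rightly identify as the substantive part, is also sound as outlined: equality of the strict sublevel sets $\{z : s_{\min}(zI-\mathbf{A})<\varepsilon\}$ and $\{z : \operatorname{dist}(z,\sigma(\mathbf{A}))<\varepsilon\}$ for \emph{all} $\varepsilon>0$ does force the pointwise identity $\|(zI-\mathbf{A})^{-1}\|_2 = 1/\operatorname{dist}(z,\sigma(\mathbf{A}))$ off the spectrum (two nonnegative functions with identical strict sublevel sets coincide); a Jordan block of size at least $2$ at $\lambda_j$ would make the resolvent norm grow at least like $|z-\lambda_j|^{-2}$, contradicting the first-order growth of $1/\operatorname{dist}(z,\sigma(\mathbf{A})) = 1/|z-\lambda_j|$ near $\lambda_j$; diagonalisability then gives $(zI-\mathbf{A})^{-1} = \sum_k P_k/(z-\lambda_k)$, and since the regular part $\sum_{k\neq j}P_k/(z-\lambda_k)$ stays bounded near $\lambda_j$, the factor $|z-\lambda_j|$ kills it in the limit and yields $\|P_j\|_2=1$; a norm-one idempotent is an orthogonal projection (for $v\in\operatorname{ran}P_j$, $w\in\ker P_j$ one has $\|v\|\leq\|v+tw\|$ for all scalars $t$, forcing $\langle v,w\rangle=0$); finally, self-adjointness of the $P_j$ together with $P_jP_k=\delta_{jk}P_j$ gives $\mathbf{A}\mathbf{A}^* = \sum_j |\lambda_j|^2 P_j = \mathbf{A}^*\mathbf{A}$. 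Your closing caveat is also well taken: matching only the numerical range with the convex hull of the spectrum (the convexoid property) does not imply normality, so the full resolvent identity is genuinely needed. Since this theorem is used in the paper only once (in the proof of \Cref{thm:existenceofeigenfrquency}, to convert the pseudo-eigenvalue estimate into $|\lambda_{\mathsf{i}}-\lambda_{\mathsf{i}}^{(N)}|\leq\epsilon_N$), supplying a self-contained proof as you have done is a legitimate strengthening of the exposition rather than a deviation from it.
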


\section{Topological origin} \label{appC} 
Infinite SSH structures have long been known for their topological nature. In this section, we show that a topological invariant can be defined also for finite structures in such a way that when the size of the system grows they converge to the infinite invariants. For one-dimensional crystals, the \emph{Zak phase} has been shown to be related through an \emph{if and only if} condition to the existence of interface modes, see \emph{e.g.} \cite[Theorem 1]{coutant2023surface} and also \cite{fefferman1, hai2022}. Denoting by $u_j^\alpha$ a family of eigenmodes depending piecewise smoothly on the quasiperiodicity $\alpha$, the Zak phase is defined as 
\begin{align}
    \label{eq: def Zak}
    \phi^{\text{zak}}_j \coloneqq \mathrm{i}\int_{Y^*} \left\langle u_j^\alpha,\frac{\partial}{\partial \alpha} u^\alpha_j\right\rangle \dd \alpha,
\end{align}
where $Y^*$ denotes the first Brillouin zone and $\langle \cdot, \cdot \rangle$ the usual $L^2$ inner product. The Zak phase is also related to the symmetries of the eigenfunctions \cite{coutant2023surface}. Specifically, denoting by $u^+_j$ the eigenfunction associated to the quasifrequency that maximises the  $j$-th band function $\omega_j^\alpha$ we may define bulk topological index of the $j$-th band gap as
\begin{align}
    \label{eq: def indicator}
    \mathcal{J}_j \coloneqq \begin{dcases}
        +1, & u^+_j(x) = \mathcal{P}(u^+_j),\\
        -1, & u^+_j(x) = -\mathcal{P}(u^+_j),
    \end{dcases}
\end{align}
for $j=1,2,$ where $\mathcal{P}$ denotes the mirroring of the unit cell $Y$ about its center. Then, the Zak phase is related to $\mathcal{J}_j$ via
\begin{align}
    \label{eq: indicator and zak}
    \mathcal{J}_j = (-1)^{j-1}\prod_{k=1}^{j}e^{\mathrm{i}\phi^{\text{zak}}_k}.
\end{align}

Previous work \cite[Proposition 5.5]{ammari.ea2023Edge} has shown that the Zak phase of a periodic system of dimers is quantised and depends on the inter- and intra-spacing between the cells. In particular,
\begin{align*}
    \phi^{\text{zak}}_j = \begin{dcases}
        \pi, &s_1 \geq s_2,\\
        0, & s_1 < s_2,
    \end{dcases}
\end{align*}
and thus 
\begin{align*}
    \mathcal{J}_1 = \begin{dcases}
        -1, &s_1 \geq s_2,\\
        1, & s_1 < s_2.
    \end{dcases}
\end{align*}
We will show now that \eqref{eq: def indicator} lends itself well to a reformulation in the finite case. We denote by $\mathcal{P}^{\text{d}}$ the discrete equivalent of $\mathcal{P}$ that is, for $\bm v \in \R^{2k}$,
\begin{align*}
    \mathcal{P}^{\text{d}}(\bm v)^{(j)} \coloneqq \begin{dcases}
        \bm v^{(j-1)},& 1\leq j\leq 2k,\text{ $j$ even},\\
        \bm v^{(j+1)},& 1\leq j\leq 2k,\text{ $j$ odd},
    \end{dcases}
\end{align*}
and define the discrete equivalent of \eqref{eq: def indicator} by
\begin{align}
    \label{eq: def indicator discrete}
    \mathcal{J}^{\text{d}}(\bm v) \coloneqq \frac{1}{\Vert u\Vert_2^2} \left\langle \bm v,\mathcal{P}^{\text{d}}(\bm v)\right\rangle.
\end{align}
In \cref{fig: discrete indicator} we show the values of $\mathcal{J}^{\text{d}}(\bm v)$ for two structures composed of $40$ dimers. \cref{fig: indicator12} shows the case $s_1=1$ and $s_2=2$ while \cref{fig: indicator21} shows the case $s_1=2$ and $s_2=1$. We observe a very different behaviour, but in particular the value of interest in view of \eqref{eq: def indicator} is the one at $\lambda=1$, where the two structures take values $\approx+1$ and $\approx-1$ respectively showing that through \eqref{eq: def indicator discrete} we may generalise \eqref{eq: def indicator} to discrete structures.
    \begin{figure}[h]
    \centering
    \begin{subfigure}[t]{0.48\textwidth}
    \centering
    \includegraphics[height=0.76\textwidth]{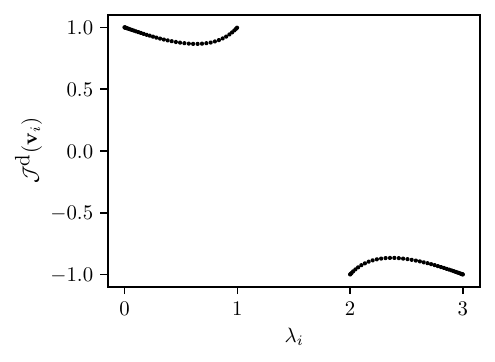}
    \caption{Discrete indicator $\mathcal{J}^{\text{d}}(\bm v)$ $s_1=1$ and $s_2=2$.}
    \label{fig: indicator12}
    \end{subfigure}\hfill
    \begin{subfigure}[t]{0.48\textwidth}
    \centering
    \includegraphics[height=0.76\textwidth]{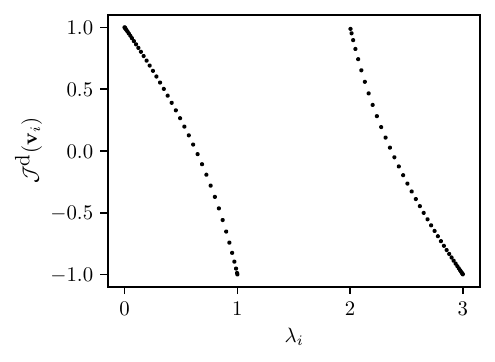}
    \caption{Discrete indicator $\mathcal{J}^{\text{d}}(\bm v)$ $s_1=2$ and $s_2=1$.}
    \label{fig: indicator21}
    \end{subfigure}
    \caption{The discrete indicator function of \eqref{eq: def indicator discrete} have very different behaviour for $s_1<s_2$ and $s_1>s_2$. The values at the end of the bands approximate accurately the values of the infinite periodic case of \eqref{eq: def indicator}.}
    \label{fig: discrete indicator}
\end{figure}

\printbibliography

\end{document}

\section{Bandgap overlap} \label{appB} \todo{Change this to a section (Section 7 or 8) rather than in appendix. Give a complete theory for the band gap, the existence of the interface modes, the topological origin, and the stability analysis. Or a complete theory can be a new paper. }
Consider a structure composed of two system of dimers merged on some interface. We denote by 
\begin{align*}
    s_{1}^{\iL},\ s_{2}^{\iL},\ s_{1}^{\iR},\ s_{2}^{\iR}
\end{align*}
the spacing intra-dimer and inter-dimer of the left and right structure,  respectively. In this section, we want to explore which are the values $s_{1}^{\iL},\ s_{2}^{\iL},\ s_{1}^{\iR},\ s_{2}^{\iR}$ that allow for a band gap in the total structure and thus open the possibility for the existence of an interface mode. We follow a similar procedure as in \cref{prop: exaclty one in gap}. Considering the compression of $\mathcal{C}$ obtained by removing the central row and column, we obtain a block diagonal matrix with blocks given by
\begin{align*}
    B^{\iL} = \begin{pmatrix}
    \tilde{\alpha}^{\iL} & \beta_{1}^{\iL}\\
\beta_{1}^{\iL} & \alpha^{\iL} & \beta_{2}^{\iL}\\
& \beta_{2}^{\iL} & \alpha^{\iL} & \beta_{1}^{\iL}\\
       && \ddots     & \ddots     & \ddots\\
       &&& \beta_{2}^{\iL} & \alpha^{\iL} & \beta_{1}^{\iL}  \\
       &&&& \beta_{1}^{\iL} & \alpha^{\iL} \\
    \end{pmatrix}\quad \text{and} \quad
    B^{\iR} = 
    \begin{pmatrix}
    \alpha^{\iR} & \beta_{1}^{\iR}\\
\beta_{1}^{\iR} & \alpha^{\iR} & \beta_{2}^{\iR}\\
& \beta_{2}^{\iR} & \alpha^{\iR} & \beta_{1}^{\iR}\\
       && \ddots     & \ddots     & \ddots\\
       &&& \beta_{2}^{\iR} & \alpha^{\iR} & \beta_{1}^{\iR}\\
       &&&& \beta_{1}^{\iR} & \tilde{\alpha}^{\iR}\\
    \end{pmatrix}.
    \end{align*}
The argument presented in the proof of \cref{prop: exaclty one in gap} shows that 
\begin{align*}
    &\sigma(B^{\iL}) \subset (0, \alpha^{\iL} + \beta_1^{\iL} - \beta_2^{\iL}) \cup (\alpha^{\iL} + \beta_2^{\iL} - \beta_1^{\iL}, \alpha^{\iL}+\beta_2^{\iL}+\beta_1^{\iL}), \\&\sigma(B^{\iR}) \subset (0,\alpha^{\iR} + \beta_1^{\iR} - \beta_2^{\iR}) \cup (\alpha^{\iR} + \beta_2^{\iR} - \beta_1^{\iR}, \alpha^{\iR}+\beta_2^{\iR}+\beta_1^{\iR}). 
\end{align*}

By the Cauchy interlacing theorem, there thus exists an asymptotic band gap if and only if
\begin{align*}
    &\max(\alpha^{\iL} + \beta_1^{\iL} - \beta_2^{\iL}, \alpha^{\iR} + \beta_1^{\iR} - \beta_2^{\iR}) < \min(\alpha^{\iL} + \beta_2^{\iL} - \beta_1^{\iL}, \alpha^{\iR} + \beta_2^{\iR} - \beta_1^{\iR})\\
    \Leftrightarrow\ & \max(s_1^{\iL},s_1^{\iR})<\min(s_2^{\iL},s_2^{\iR}).
\end{align*}

\section{No(?) topological protection}
We quote the following theorem about the stability of eigenvalues from \cite[Lemma 3.1]{ipsen2009Refined}. We assume that the eigenvalues of Hermitian matrices are sorted.
\begin{theorem}\label{thm: stability eigenvalues hermitina with gap}
    Let $A$ and $A+E$ be two $n\times n$ complex Hermitian matrices. Then for every normed eigenpair $(\lambda_i,v_i)$ of $A$, if $\min\{\lambda_{i}-\lambda_{i-1},\lambda_{i+1}-\lambda_i\}> 2 \Vert E\Vert$ we have
    \begin{align}\label{eq: hermitian bound on eigenvalue perturbation}
        \vert \lambda_i - \tilde \lambda_i \vert \leq \Vert Ev_i\Vert
    \end{align}
    where $\tilde \lambda_i$ is the $i$-th eigenvalue of $A+E$.
\end{theorem}
The immediate consequence of \cref{thm: stability eigenvalues hermitina with gap} is that any geometrical perturbation satisfying $\Vert\hat{\mathcal{C}}-\mathcal{C}\Vert<\frac{d_\mathsf{i}}{2}$ where $d_\mathsf{i}$ is the distance $d_\mathsf{i}=d(\lambda_\mathsf{i},\Sigma)$ cannot push $\lambda_\mathsf{i}$ into the bulk as
\begin{align*}
    \vert \lambda_\mathsf{i} - \tilde \lambda_i \vert \leq \Vert Ev_\mathsf{i}\Vert \leq \Vert E\Vert < \frac{d_\mathsf{i}}{2} <d_\mathsf{i}
\end{align*}
{\color{red}
Silvio: To me this shows that the stability comes from the Hermitian structure of the system and the location of the interface frequency inside the gap, and \emph{not} from some topological reason(?)}
